\theoremstyle{definition}
\newtheorem{definition}{Definition}[section]
\newtheorem{lemma}{Lemma}[section]
\newtheorem{corollary}[lemma]{Corollary}
\newcommand\Gto{\stackrel{\mathclap{\normalfont\mbox{$G_n'$}}}{\longrightarrow}}
\newcommand{\precdot}{\prec\mathrel{\mkern-5mu}\mathrel{\cdot}}
\newcommand{\calC}{{\mathcal{C}}}
\newcommand{\calH}{{\mathcal{H}}}
\newcommand{\calV}{{\mathcal{V}}}
\newcommand{\calP}{{\mathcal{P}}}
\newcommand{\RR}{\mathbb{R}}
\newcommand{\dcf}{decoherence functional}
\title{In-in correlators and scattering amplitudes\\ on a causal set}
\author[a]{Emma Albertini,}
\author[a,b]{Fay Dowker,}
\author[a]{Arad Nasiri}
\author[a]{and Stav Zalel\note{Corresponding author: stav.zalel11@imperial.ac.uk}}
\affiliation[a]{Blackett Laboratory, Imperial College London, SW7 2AZ, U.K}
\affiliation[b]{Perimeter Institute, 31 Caroline Street North, Waterloo ON, N2L 2Y5, Canada}
\abstract{Causal set theory is an approach to quantum gravity in which spacetime is fundamentally discrete at the Planck scale and takes the form  of an irregular Lorentzian lattice, or ``causal set'', from which continuum spacetime emerges in a large-scale (low-energy) approximation. In this work, we present new developments in the framework of interacting quantum field theory on causal sets. We derive a diagrammatic expansion for in-in correlators in local scalar field theories with finite polynomial interactions.
We outline how these same correlators can be computed using the double-path integral which acts as a generating functional for the in-in correlators. We modify the in-in generating functional to obtain a generating functional for in-out correlators. We define a notion of scattering amplitudes on causal sets with non-interacting past and future regions and verify that they are given by S-matrix elements (matrix elements of the time-evolution operator). We describe how these formal developments can be implemented to compute early universe observables under the assumption that spacetime is fundamentally discrete.}
\begin{document}
\maketitle
\flushbottom

\section{Introduction}
\label{sec:intro}

The challenge of quantum gravity is bridging the mathematical and conceptual disparities between quantum mechanics and general relativity \cite{Bambi:2023jiz,Carlip:2001wq,Oriti:2009zz,Sorkin:1997gi}. These disparities can also be credited for the proliferation of potential resolutions, since the resolution that we will reach will depend on the building-blocks we used to construct it (a canonical or path integral formalism, a continuum or a discretuum \textit{etc.}). But ultimately, this division between opposing principles should be reconciled as they all emerge from a unified theory of quantum gravity.

Causal set theory (CST) is an approach to quantum gravity in which spacetime is fundamentally discrete at the Planck scale \cite{surya2019causal,Dowker:2020qqs,Dowker:2013dog,Bambi:2023jiz,Yazdi:2023scl,Dowker:2011zz}. Taking its cue from theorems in Lorentzian geometry which state that in past- and future-distinguishing spacetimes (of which globally hyperbolic spacetimes are a subclass) the causal structure determines the metric uniquely up to the conformal factor \cite{hawking1976new,malament1977class}, CST elevates the causal structure of spacetime to be its principal feature \cite{bombelli1987space,bombelli1989origin,sorkin2005causal,surya2019causal}. The missing ingredient---the conformal factor or volume measure---is accounted for by the discreteness which replaces spacetime points with spacetime ``atoms'' which need only be counted in order to compute spacetime volumes. Concretely, in CST spacetime takes the form of a Lorentzian lattice or \textit{causal set}, a collection of elements arranged in a partial order $\prec$ which encodes the causal relations between them, where $x\prec y$ reads as ``$x$ precedes $y$'' or ``$x$ lies in the causal past of $y$''. The continuum geometries of general relativity must emerge from this discrete structure via some large-scale approximation and coarse-graining \cite{surya2019causal,Bombelli:2000wu}. Much progress has been made in recent years in extracting continuum quantities, such as dimension and curvature, directly from the causal set (see for example \cite{meyer1988dimension,reid2003manifold,roy2013discrete,benincasa2010scalar,sverdlov2009gravity,Eichhorn:2018doy}).

One of the important achievements of this line of work has been the formulation of quantum field theory on a causal set \cite{Johnston:2008za, Johnston:2009fr,Sorkin:2017fcp, Dable-Heath:2019sej,Sorkin:2011pn,albertini:2021,Jubb:2023mlv}. Acting as a bridge between the discrete and the continuum, this work was able to reproduce the Green functions of continuum geometries from underlying causal sets through combinatorial means \cite{X:2017jal,Johnston:2008za}. It also established a distinguished vacuum state on a causal set \cite{Sorkin:2017fcp}. This last achievement is revealing of the fact that a causal set is more akin to a curved geometry than to a flat one, sharing the challenges posed by curvature---there may be no distinguished vacuum state or no natural way to define asymptotic states or an S-matrix. Indeed, the construction of a distinguished vacuum on a causal set prompted the Sorkin-Johnston proposal for a distinguished vacuum state in continuum spacetimes \cite{afshordi2012ground,Afshordi:2012jf,aslanbeigi2013preferred,Surya:2018byh,Zhu:2022kcf} which coincides with the ground state of the Hamiltonian in the case of static spacetimes. It also offered new insight into the computation of entanglement entropy \cite{Yazdi:2022hhv,Chen:2020ild,Keseman:2021dkf}.

In this work, we build on \cite{albertini:2021,Jubb:2023mlv,Sorkin:2011pn} in extending the formalism of quantum field theory on causal sets to include interactions. We develop a systematic diagrammatic method for computing in-in correlators and define a notion of scattering on a causal set.

\paragraph{In-in formalism}
The in-in formalism is conceptually and mathematically well-suited to CST: it is well-known to be manifestly causal \cite{Dickinson:2013lsa,Musso:2006pt,Weinberg:2005vy}; it requires only the notions of a state and of local operators, both of which are readily available on a causal set; and it can be phrased in terms of the Keldysh-Schwinger double path integral, a close relative of the decoherence functional which describes spacetime dynamics in CST \cite{dowker2010extending, Surya:2020cfm}. These aspects come into play in our work and make the adaptation from the continuum to the discrete possible. In particular, we see direct analogues between our manifestly-causal diagrammatic framework for computing in-in correlators on a causal set with the work of \cite{Dickinson:2013lsa,Musso:2006pt} in the continuum. We also find one major difference: the fundamental discreteness of the causal set acts as a natural cut-off, eliminating the UV divergences which characterise the continuum (although IR divergences are present when the interacting region is infinite). This is because each diagram in our expansion is a sub-causal set of the interaction region. Since a finite interaction region contains only finitely many sub-causal sets, the associated expansion contains only finitely many diagrams and terminates at a finite order in the interaction coupling.

In recent years, the in-in formalism has become increasingly important as the modern sky surveys whose data it describes continue to make our cosmological observations increasingly precise (\textit{e.g.}, \cite{akrami2020planck,ballardini2016probing}), enabling us for the first time to glimpse into the early universe and probe regimes in which quantum gravity effects become important---the regimes in which CST predicts we will detect the Planckian discreteness of spacetime.
To make the most of these observational advances, we must develop robust and predictive theoretical frameworks through which the observations can be interpreted. This is one of the main motivations behind this work and our intention is for the construction which we present here to be applied in computing cosmological predictions for sky surveys (\textit{e.g.}, primordial non-gaussianities \cite{chen2010primordial,wang2014inflation}).

\paragraph{Discrete/continuum correspondence}
Our formalism enables the computation of in-in correlators on any causal set, but not every causal set is a physically sensible background. We are interested in those causal sets which can give rise to cosmological geometries through the so-called discrete/continuum correspondence \cite{Saravani:2014gza,surya2019causal,Johnston:2021lei}. We say that a causal set $(\mathcal{C},\prec)$ is well-approximated by a continuum geometry $(\mathcal{M},g)$ of dimension $d$ if there exists a \textit{faithful embedding} of $\mathcal{C}$ in $\mathcal{M}$, that is a map $f:\mathcal{C}\rightarrow\mathcal{M}$ such that,
\begin{itemize}
    \item[$(i)$] $q\prec p \iff f(q)\in J^-(f(p))$, where $J^-(x)$ denotes the causal past of $x\in\mathcal{M}$,
    \item[$(ii)$] the points $f(\mathcal{C})$ are distributed in $\mathcal{M}$ according to the
Poisson distribution at some fixed density $\rho$,
\item[$(iii)$] the discreteness length $l=\rho^{-\frac{1}{d}}$ is small compared to any curvature length scale in $\mathcal{M}$. 
\end{itemize}
Condition $(i)$ requires that $f$ preserves the causal order and condition $(ii)$ requires that the number of elements mapped by $f$ into an interval of
volume $V$ in $\mathcal{M}$ is approximately equal to $\rho V$. There are other ways in which one could imagine evenly distributing the points in $\mathcal{M}$, for example in a regular square grid. But in flat space such a grid picks a preferred frame: once the grid is embedded in the continuum, a Lorentz boost will render it no longer a regular grid. The Poisson process has the advantage of Lorentz invariance: a Lorentz transformation merely changes a Poisson distribution of points into another Poisson distribution with the same density \cite{bombelli2009discreteness,Saravani:2014gza,Dowker:2019oga}. 

``Sprinkling'' is a method for numerical implementation of the discrete/continuum correspondence: given a geometry $(\mathcal{M},g)$, one generates causal sets which faithfully embed into it by sampling points from $\mathcal{M}$ according to the Poisson distribution (\textit{i.e.}, by ``sprinkling'' into $\mathcal{M}$) \cite{Fewster:2020kqo,Saravani:2014gza}. This process is particularly suited to a cosmological setting where the observable spacetime volume is finite.

Given some combinatorial quantity $\textbf{Q}$ (such as a causal set in-in correlator), its average $\mathbb{E}_{\rho}(\textbf{Q})$ over a sample of causal sets produced via sprinkling into $\mathcal{M}$ at fixed $\rho$ is a quantity which we can associate with the continuum $(\mathcal{M},g)$. 
The result will depend on $\rho$, and the phenomenological effects of spacetime discreteness will be encoded in the high (but finite) $\rho$ corrections. In some cases, one can compute the ensemble average analytically and show that its $\rho\rightarrow\infty$ limit is exactly equal to an invariant of the continuum (\textit{e.g.}, this is the case for curvature invariants \cite{belenchia2016continuum}), suggesting that our framework may prove a useful computational tool independently of whether or not spacetime is fundamentally discrete. In particular, our framework may offer a novel regularization of the UV divergences of the continuum.

\paragraph{Scattering amplitudes} 
Scattering between asymptotic states is what we measure in particle accelerator experiments and the associated S-matrix has been an invaluable tool for studying quantum field theories on flat space \cite{Eden:1966dnq}. Its definition on de Sitter has also received some attention, where the lack of a conserved positive energy-like quantity poses an obstruction to the definition of asymptotic particle states \cite{Melville:2023kgd,Bousso:2004tv,Marolf:2012kh,Benincasa:2022gtd}. Therefore, it is natural to ask how one might define the S-matrix on a causal set. There are many continuum concepts and structures that are lacking on a causal set and in particular,  rather than continuum hypersurfaces labelled by $t$ we simply have a discrete label (taking its values in the positive integers) for each causal set site. These labels play the role of coordinates and their permutation (the analogue of a coordinate transformation) should leave the S-matrix invariant. This complicates the question of how analogues of asymptotic states should be defined. Should these states be associated with a specific site, and if so, which particular site would be appropriate? We suggest an answer to this question in the set-up where the interaction region is confined to an area between non-interacting ``past'' and ``future'' regions.

\paragraph{Main results} Our results are restricted to 
interacting scalar field theories on a causal set background, under the assumption that all interaction terms are local polynomials of a finite order. With this caveat, the main results of this work are as follows:
\begin{itemize}
\item \textit{A diagrammatic expansion for the Heisenberg field.} Our derivation of the expansion relies on the commutator representation of time evolution (attributed to Weinberg \cite{Weinberg:2005vy}) which in the continuum is equivalent to solving the full equation of motion for the Heisenberg field. We show that when the interaction region is finite, the expansion terminates at a finite order. Hence, in this scenario the Heisenberg field is a finite-order polynomial in the interaction picture fields and vice versa.
    \item \textit{A diagrammatic expansion for in-in correlators on a causal set.} Our expansion makes use of Wick's theorem and is therefore limited to the case when the in-state is a Gaussian state.  Aside from this caveat, our expansion is generic and can be used to compute ``non-equal time'' in-in correlators of composite operators (as there are no issues with coincident operators in the discrete). Our diagrams have two kinds of legs corresponding to the Feynman and the retarded propagator. Each interaction vertex must be connected to an external vertex via a path of retarded propagators, thus causality is manifest. When the interaction region is finite, the expansion terminates at a finite order.
    \item \textit{A generating functional for in-out correlators.} We review the generating functional for in-in correlators and modify it to obtain a generating functional for in-out correlators. These correlators have an infinite expansion even on a finite causal set.
    \item \textit{The S-matrix on a causal set.} We define the analogue of asymptotic states in the case where the interaction region is bounded between non-interacting past and future regions. We use this to define the S-matrix and give its diagrammatic expansion. The expansion contains the same diagrams as in the continuum (and hence does not terminate).

\end{itemize}

\paragraph{Outline} In Section \ref{sec:freetheory}, we summarise some useful causal set terminology and review the formalism for a free quantum field theory on a causal set. In Section \ref{sec: interacting fields}, we define the Heisenberg field in interacting theories in terms of the causal set time evolution operator and give its diagrammatic expansion. In Section \ref{sec: in in}, we give our diagrammatic rules for in-in correlators on a causal set (leaving the detailed derivation to Appendix \ref{in-in derivation appendix}). In Section \ref{sec: path int}, we outline how the in-in correlators of the previous section can be computed using a double-path integral whose measure is given by a decoherence functional. And we modify this generating functional to give a generating functional for in-out correlators. In Section \ref{sec: scattering} we define the S-matrix on a causal set.
We conclude with discussion of future directions in Section \ref{sec: conclusion}.

\section{Background}\label{sec:freetheory}
In this Section we introduce the causal set terminology which we will need and review the operator formalism for a free scalar quantum field theory on a causal set. For a review of causal set theory and related terminology see \cite{surya2019causal} and references therein.

\subsection{Causal sets}\label{subsec: causets}
A \emph{partially ordered set} or \emph{poset} is a set (called the \textit{underlying set}) together with an irreflexive, antisymmetric and transitive relation, denoted by $\prec$, on it. Given a poset and a pair of elements $x$ and $y$ in it which satisfy $x\prec y$, the associated \emph{interval}, $int(x,y)$, is,
\[
int(x,y)=\{z|x\prec z\prec y\}.
\]
A poset is \emph{locally finite} if every interval of the poset is finite. A \textit{causal set} or \textit{causet} is a locally finite poset. By the standard abuse of notation we will use $\mathcal{C}$ to denote both the causal set and its underlying set. Given a causet $\mathcal{C}$ and $x,y \in \mathcal{C}$ such that $x\prec y$, we say that $y$ \emph{covers} $x$ and write $x\precdot y$ if the interval $int(x,y)$ is empty (\textit{i.e.}, if there is no element $z$ which lies between $x$ and $y$ in the partial order). If $x$ and $y$ are unordered by $\prec$ we write  $x\natural y$. 

A Hasse diagram of a causet $\mathcal{C}$ is a graph whose vertices represent the elements of $\mathcal{C}$ and whose edges represent the covering relations in $\mathcal{C}$, where there is an upward-going edge from $x$ to $y$ if and only if $x \precdot y$ (the other relations are implied by transitivity). A Hasse diagram is the transitive reduction of a directed acyclic graph.

Using a Hasse diagram, we can think of a causal set as a Lorentzian lattice. The lattice sites play the role of spacetime points, while the causal structure is given by the direction of the edges: the past (future) of a lattice site $x$ are all the sites $y$ such that there exists an upward-going (downward-going) path from $y$ to $x$, and if a pair of sites $x$ and $y$ are such that there is no directed path
from one to the other then $x$ and $y$ are spacelike to each other. This is the interpretation of the causal set in CST: the partial order is the causal order, the past of an element $y$ are all elements $x$ such that $x\prec y$, the interval $int(x,y)$ plays the role of the Alexandrov set (the intersection of the past of a point with the future of another) \textit{etc.} It is in this sense the causal set encodes a causal structure and furnishes a discrete spacetime.

A tenet of CST is that the points (or ``atoms'') of the causal set spacetime are indistinguishable from one another, except through the partial order in which they are arranged. Mathematically, this idea is captured by order-isomorphism classes of causal sets, defined as follows. Two causets $\mathcal{C}$ and $\mathcal{C}'$ are \emph{order-isomorphic} if there exists a bijection $f:\mathcal{C}\rightarrow \mathcal{C}'$ that preserves the order relation, \textit{i.e.}  $f(x)\prec f(y) \Leftrightarrow x\prec y$ for all $x,y \in \mathcal{C}$ (where by standard abuse of notation we write $\prec$ to denote both the partial order in $\mathcal{C}$ and in $\mathcal{C}'$). Order-isomorphism is an equivalence relation and an \textit{unlabelled causet} is an order-isomorphism equivalence class. An unlabelled causet is represented by an unlabelled Hasse diagram and a causet representative of it can be constructed simply by labeling the vertices. In practice, working with labelled objects is simpler than working with unlabelled objects. Therefore, we will work with causets directly (not with their equivalence classes), bearing in mind that the choice of the underlying sets or of any labelling attached to the causet elements is pure gauge (just like coordinates are gauge in the continuum). For concreteness, we now specify the causets which we will work with. A causal set is \textit{finite} if its underlying set is finite, \textit{i.e.}, $|\mathcal{C}|<\infty$. A finite causal set $\mathcal{C}$ is \textit{naturally labelled} if its underlying set is $\{1,2,\ldots,|\mathcal{C}|\}$ and $x\prec y \implies x<y \ \forall \ x,y\in\mathcal{C}$. From here on, we will work with finite naturally labelled causal sets. If $\mathcal{C}$ and $\mathcal{C}'$ are naturally labelled causets which are order-isomorphic to each other, we will say that $\mathcal{C}$ is a \textit{relabelling} of $\mathcal{C}'$ (and vice versa), since the Hasse diagram of one can be obtained from the Hasse diagram of the other simply by relabelling the vertices. What we mean by saying that the labelling is pure gauge is that the physics remains unchanged under relabelling, \textit{i.e.}, under the interchange of $\mathcal{C}$ with $\mathcal{C}'$.

We will have use for the following causal set terminology. A subcauset $S\subseteq \mathcal{C}$ is called a \textit{stem} if $x\in S \implies y\in S$ for all $y\prec x$ in $\mathcal{C}$. A causet $\mathcal{C}$ is a \textit{chain} if it is totally ordered (\textit{i.e.}, $x\prec y$ or $y\prec x$ for all pairs $x,y\in\mathcal{C}$). 
A subcauset $S\subseteq \mathcal{C}$ is called a \textit{path} if it is a chain and $y\precdot x$ in $S \implies y\precdot x$ in $\mathcal{C}$.
An element $x\in\mathcal{C}$ is \textit{minimal} (\textit{maximal}) if there exists no $y\in\mathcal{C}$ such that $y\prec x$ ($y\succ x$).
We say that an element $x\in\mathcal{C}$ is in level $l$ if the longest chain of which $x$ is the maximal element has cardinality $l$. Thus, 
level 1 of $\mathcal{C}$ comprises the minimal elements, level 2 comprises the minimal elements of what remains of $\mathcal{C}$ after the
elements in level 1 are deleted, \textit{etc.}
\subsection{Free quantum field theory on a causal set}

We give a brief review of the Sorkin-Johnston construction for a free scalar field on a causal set \cite{Johnston:2008za,Johnston:2009fr,Sorkin:2011pn,Sorkin:2017fcp,X:2017jal}. This construction takes the retarded propagator as its starting point and uses it to write down covariant commutation relations for the field operators from which a distinguished vacuum state and its associated Fock representation can be derived. This approach is a covariant alternative to the canonical approach and is therefore suitable for our discrete setting. This is also the reason it is appropriate for curved (continuum) spacetimes to which it has been adapted in \cite{Afshordi:2012jf, aslanbeigi2013preferred}. When applied to spacetimes with a timelike Killing vector, the Sorkin-Johnston ground state coincides with the ground state of the Hamiltonian \cite{Afshordi:2012jf}.

A propagator is a real function of two spacetime points. On a (finite, naturally labelled) causet $\mathcal{C}$, we can represent a propagator as a matrix indexed by the elements of $\mathcal{C}$. The retarded propagator, denoted by $\Delta^R_{xy}$, is defined by the requirement that $\Delta^R_{xy}=0$ whenever $x\not\succ y$. Since $\mathcal{C}$ is naturally labelled, this constraint implies that $\Delta^R_{xy}$ is lower-triangular. Various prescriptions can be found in the literature for fixing its non-vanishing entries. These prescriptions can be roughly split into two camps: those in which $\Delta^R_{xy}$ is defined as the inverse of a suitably discretised d'Alembertian \cite{Dable-Heath:2019sej, Aslanbeigi:2014zva} and those in which $\Delta^R_{xy}$ is defined in a way which guarantees that it approaches the continuum retarded propagator in an appropriate limit \cite{X:2017jal,Johnston:2008za}. Depending on the exact prescription, $\Delta^R_{xy}$ may depend on parameters such as the dimension of the continuum spacetime which approximates $\mathcal{C}$, but once these parameters are fixed we may regard $\Delta^R_{xy}$ as purely combinatorial in the sense that it can be obtained directly from $\mathcal{C}$ (\textit{e.g.}, via its adjacency matrix). What follows is independent of the choice of $\Delta^R_{xy}$.

Given the retarded propagators, we define the advanced propagator and the Pauli-Jordan (also known as the causal propagator), respectively, as,
\begin{equation}
    \begin{split}
    &\Delta^A_{xy}=\Delta^R_{yx},\\
    &\Delta_{xy}=\Delta^R_{xy}-\Delta^A_{xy}.
        \end{split}
\end{equation}
Next, associate a field operator $\phi(x)$ to each $x\in\mathcal{C}$ and impose the Peierls bracket,
\begin{equation}\label{peierls bracket}
    [\phi(x),\phi(y)]=i\Delta_{xy}.
\end{equation}
Note that Peierls brackets guarantees that $\phi(x)$ and $\phi(y)$ commute if $x$ and $y$ are spacelike separated (\textit{i.e.}, if $x\natural y$). Following from its definition, the matrix $i\Delta_{xy}$ is Hermitian and can be decomposed via the spectral theorem as,
\begin{equation}\label{spectral decomposition eq}\begin{split}
     i\Delta_{xy}&=\sum_{\lambda} \lambda v^{(\lambda)}_x\bar{v}^{(\lambda)}_y\\
     &=\sum_{\lambda>0} \lambda v^{(\lambda)}_x\bar{v}^{(\lambda)}_y-\overline{\sum_{\lambda>0} \lambda v^{(\lambda)}_x\bar{v}^{(\lambda)}_y},
\end{split}
\end{equation}
where the bars denote complex conjugation and the $v^{(\lambda)}_x$ and the $\lambda$ are the eigenvectors and eigenvalues of $i\Delta_{xy}$, respectively, 
\begin{equation}
    i\Delta_{xy}v^{(\lambda)}_x=\lambda v^{(\lambda)}_x.
\end{equation} 
In the first line of \eqref{spectral decomposition eq}, the sum is over all the eigenvalues $\lambda$. The second line (where the sums are restricted to the positive $\lambda$) is obtained by using the fact that the $\lambda$'s are real and that if $\lambda>0$ is an eigenvalue with eigenvector $v^{(\lambda)}$ then $-\lambda$ is an eigenvalue with eigenvector $v^{(-\lambda)}=\bar{v}^{(\lambda)}$. The first term on the second line is called the \textit{positive part} of $i\Delta$,
\begin{equation}
    Pos(i\Delta)=\sum_{\lambda>0} \lambda v^{(\lambda)}_x\bar{v}^{(\lambda)}_y,
\end{equation}
while the second term is its complex conjugate. 
Another way to express $i\Delta$ as a difference between a c-number and its complex conjugate is to consider the expectation value of the Peierls bracket in some state of choice to obtain,
\begin{equation}
    \langle[\phi(x),\phi(y)]\rangle=W_{xy}-\overline{W}_{xy}=i\Delta_{xy},
\end{equation}
where $W_{xy}=\langle \phi(x)\phi(y)\rangle$ is the Wightman function in the chosen state.
The insight of the Sorkin-Johnston formalism is that the eigenvectors of $i\Delta$ form a distinguished basis which can be used to define a distinguished Gaussian vacuum state $|0\rangle$ by requiring that,
\begin{equation}\label{eq sj W}
\langle0|\phi(x)\phi(y)|0\rangle=Pos(i\Delta_{xy}).
\end{equation}
We call $|0\rangle$ the Sorkin-Johnston (SJ) vacuum.

The SJ vacuum has a Fock representation where the positive eigenvalues and their associated eigenvectors play the role of positive frequencies and mode functions, respectively. For each $\lambda>0$, introduce a pair of conjugate ladder operators $a_{\lambda}$ and $a_{\lambda}^{\dagger}$, and impose the commutation relations,
\begin{equation}\label{eq: comm rel a}
[a_{\lambda},a_{\lambda'}^{\dagger}]=\delta_{\lambda,\lambda'}.
\end{equation}
The SJ vacuum is the state annihilated by the $a_{\lambda}$ and the Fock representation is built by acting on it with the $a_{\lambda}^{\dagger}$. Expanding the fields as,
\begin{equation}\label{free field mode expansion}
    \phi(x)=\sum_{\lambda>0} \sqrt{\lambda}\bigg( v^{\lambda}_x a_{\lambda}+\bar{v}^{\lambda}_x a_{\lambda}^{\dagger}\bigg),
\end{equation}
one can verify that the Wightman function of the SJ state is given by \eqref{eq sj W}.

\subsection{Causal ordering and the Feynman propagator}\label{sec label order}
To complete our discussion of the free theory, we introduce the notion of \textit{causal ordering}. Given some $x,y\in\mathcal{C}$, we say that the product $\phi(x)\phi(y)$ is causally ordered if $x\not\prec y$. We write $C$ to denote the causal ordering operator whose action on a product of two fields is,
\begin{equation} C[\phi(x)\phi(y)]=
    \begin{cases}
      \phi(x)\phi(y) & \text{if $x\succ y$}\\
       \phi(y)\phi(x) & \text{if $x\prec y$},\\
    \end{cases}\end{equation}
where for a spacelike pair of points $x\natural y$ the associated field operators commute and the $C$ ordering is trivial: $C[\phi(x)\phi(y)]=\phi(x)\phi(y)=\phi(y)\phi(x)$. More generally, a product of field operators is causally ordered if no field operator is to the right of an operators which lives in its past. In a labelled causal set, ordering a product of operators by decreasing label from left to right always results in a causal ordering, \textit{e.g.} $\phi(4)\phi(4)\phi(2)\phi(1)$.

The anti-causal ordering operator $\overline{C}$ orders a product of field operators so that no field operator is to the left of an operators which lives in its past. In a labelled causal set, ordering a product of operators by increasing label from left to right always results in an anti-causal ordering, \textit{e.g.} $\phi(1)\phi(2)\phi(4)\phi(4)$.

Causal ordering is the causal set analogue of the time ordering of the continuum, and we use it to define the Feynman propagator,
\begin{equation}
    \Delta^F_{xy}=\langle C[\phi(x)\phi(y)]\rangle.
\end{equation}

\section{Interacting fields}\label{sec: interacting fields}

In the interacting theory, the interaction picture fields carry the free evolution. In the causal set context, this means that the interaction picture fields satisfy the Peierls bracket \eqref{peierls bracket} and have the mode expansion \eqref{free field mode expansion}. We will denote the interaction picture fields simply by $\phi(x)$. We write $\mathcal{H}$ to denote the interacting Hamiltonian (density) in the interaction picture, and we restrict ourselves to local Hamiltonians which are finite polynomials in the field, \textit{e.g.} $\mathcal{H}(x)=\frac{g}{n!}\phi^n(x)$.

The Heisenberg picture is defined by analogy with the continuum, where the Heisenberg field $\phi^H(t,\textbf{x})$ is related to the interaction picture field $\phi(t,\textbf{x})$ via,
\begin{equation}
    \phi^H(t,\textbf{x})=U^{\dagger}(t,t_0)\phi(t,\textbf{x}) U(t,t_0).
\end{equation}
where,
\begin{equation}
    U(t,t_0)=T\bigg[e^{-i \int_{t_0}^{t} H(t) dt}\bigg] \ \ \ (t\geq t_0)
\end{equation}
is the time-evolution operator and where $H$ is the interacting Hamiltonian in the interaction picture. The analogy suggests that we replace the time integral by a sum over causal set points and the time-ordering $T$ with the causal ordering $C$ (defined in Section \ref{sec label order}). Under the action of $C$, all field commutators vanish and we can express the exponential of a sum as a product of exponentials. This is the motivation for the following definitions.

Given a causet $\mathcal{C}$ and some $x,y\in\mathcal{C}$, we define the following family of evolution operators,
\begin{align}
\begin{split}\label{def V}
   &   V_x= C\bigg[\prod_{z\prec x} e^{-i\mathcal{H}(z)}\bigg],
\end{split}\\
  \begin{split}\label{def Uxy}
    & U_{x,y}=C\bigg[\prod_{y\leq  z < x} e^{-i\mathcal{H}(z)}\bigg]\text{ for }x>y,
    \end{split}\\
    \begin{split}\label{def U}
 &U_{x}=\begin{cases}
        1 & \text{if $x=1$,}\\
       U_{x,1}=C\bigg[\prod_{z < x} e^{-i\mathcal{H}(z)}\bigg] & \text{if $x>1$},\\
    \end{cases}
\end{split}
\end{align}
and note that they satisfy the following relations,
\begin{align}
\begin{split}
   &   V^{\dagger}_x V_x=U^{\dagger}_x U_x=1,
\end{split}\\
    \begin{split}\label{UU composition}
 &U_{x,y}U_{y,z}=U_{x,z},
\end{split}\\
\begin{split}\label{UUdagger composition}
U_{x,z}U_{y,z}^{\dagger} =
    \begin{cases}
      1 & \text{if $x=y$}\\
       U_{x,y} & \text{if $x>y$}\\
       U_{y,x}^{\dagger} & \text{if $x<y$},\\
    \end{cases}  \end{split}\\ 
    \begin{split}\label{U V equiv}
   & V_x^{\dagger}\mathcal{O}(x) V_x=U_x^{\dagger} \mathcal{O}(x) U_x \text{ for any local operator }\mathcal{O}(x).
    \end{split}
\end{align}
We say that $V_x$ is a covariant operator because it relies only the partial order $\prec \ $, while $U_{x,y}$ and $U_{x}$ are label-dependent operators because they rely on the total order $<$ of the labelling. In other words, $V_x$ is a physical operator and $U_{x,y}$ and $U_{x}$ are gauge-dependent operators.

We use the covariant operator $V_x$ to define the Heisenberg picture on a causal set: given a local operator $\mathcal{O}(x)$ in the interaction picture, the Heisenberg picture operator $\mathcal{O}^H(x)$ is given by, \begin{equation}\label{Heisenberg}
    \mathcal{O}^H(x)=V_x^{\dagger} \mathcal{O}(x)V_x.
\end{equation}
In practice, working with gauge-dependent operators is simpler and thanks to relation \eqref{U V equiv} we can express the Heisenberg operator in terms of $U_x$ as,
\begin{equation}\label{Heisenberg U}
    \mathcal{O}^H(x)=U_x^{\dagger} \mathcal{O}(x)U_x.
\end{equation}
(Note that our convention for the causal set evolution operators differs from that of \cite{albertini:2021,Jubb:2023mlv}, but the resulting Heisenberg operators are independent of convention.)

\subsection{Expanding the Heisenberg field}\label{subsec: expand H field}
In the continuum, one solves the full equation of motion order by order in the interaction coupling to obtain a perturbative expansion for the Heisenberg field in terms of interaction picture fields. Alternatively, the same expansion can be extracted from the nested commutator expression for the Heisenberg field \cite{Weinberg:2005vy} (see also \cite{Musso:2006pt,Dickinson:2013lsa}),
   \begin{equation}\label{continuum heisenberg master eq}\begin{split}
\phi^H(t,\textbf{x}) =& \sum_{n=0}^{\infty}  (-i)^n\int_{-\infty}^{t} dt_1\int_{-\infty}^{t_1} dt_2\ldots \int_{-\infty}^{t_{n-1}}dt_n \ \bigg[\ldots \bigg[\bigg[\phi(t,\textbf{x}),H(t_1)\bigg],H(t_2)\bigg]\ldots,H(t_n)\bigg],\\
=&\sum_{n=0}^{\infty} (-i)^n\int_{-\infty}^{t} dt_1\ldots \int_{-\infty}^{t}dt_n \ \bigg[\ldots \bigg[\bigg[\phi(t,\textbf{x}),{H}(t_1)\bigg],{H}(t_2)\bigg]\ldots,{H}(t_n)\bigg]\Theta(t_1,\ldots,t_n),\\
    \end{split}
\end{equation}
where in the second line the time integrals all share the same domain $(-\infty,t)$ thanks to the introduction of the generalised step function $\Theta(t_1,\ldots,t_n)$ defined as,
\begin{equation}
  \Theta(t_1,\ldots,t_n)= \begin{cases}
      1 & t_1\geq t_2\geq \cdots\geq t_n\\
      0 & \text{otherwise.}\\
    \end{cases}    
\end{equation}
Equivalently, $\Theta(t_1,\ldots,t_n)$ can be expressed as a product of step functions,
\begin{equation}
  \Theta(t_1,\ldots,t_n)=  \Theta(t_1-t_2)\ \Theta(t_2-t_3)\cdots \Theta(t_{n-1}-t_n),
\end{equation}
where, \begin{equation}
    \Theta(x-a)= \begin{cases}
      1 & x\geq a\\
      0 & x< a. \\
    \end{cases}    
\end{equation}
(This latter form makes explicit the similarities with the retarded product of \cite{Hagedorn:1961gj}.)

The above formalism can be adapted to the causal set setting where it provides a route for expanding the Heisenberg field in the absence of an equation of motion. As we show in Appendix \ref{in-in derivation appendix}, starting from expression \eqref{Heisenberg U} for the causal set Heisenberg field one obtains the following expansion,
    \begin{equation}\label{eq: Heisenberg n order discrete}\begin{split}
\phi^H(x)=\sum_{n=0}^{\infty}  (-i)^n\sum_{z_1=1}^{x-1}\cdots \sum_{z_n=1}^{x-1}  \bigg[\ldots\bigg[\bigg[\phi(x),\mathcal{H}(z_1)\bigg],\mathcal{H}(z_2)\bigg]\ldots,\mathcal{H}(z_n)\bigg]\Upsilon(z_1,\ldots, z_n) ,\\
\end{split}
\end{equation}
where $\phi(x)$ is the interaction picture field, $\mathcal{H}(z)$ is the interacting Hamiltonian in the interaction picture and $\Upsilon(z_1,\ldots, z_n)$ is a modification of $\Theta(z_1,\ldots, z_n)$ which is sensitive the saturation of the inequality $z_1\geq z_2 \geq \ldots\geq z_n$,
\begin{equation}
  \Upsilon(z_1,\ldots, z_n)= \begin{cases}
      \frac{1}{l_1!\ldots l_k!} & z_1\geq z_2 \geq \ldots \geq z_n\\
      0 & \text{otherwise,}\\
    \end{cases}    
\end{equation}
where $k$ denotes the number of strings of equal signs in $z_1\geq z_2 \geq \ldots \geq z_n$ and $l_i$ is the length of the $i^{th}$ string of equal signs (\textit{i.e.}, the number of $z$ variables equated by the string). For example, $\Upsilon(8,7,5,2,1)=1, \Upsilon(7,7,5,4,4,4)=\frac{1}{2!1!3!}$ and $\Upsilon(7,5,7,4,4,4)=0$.

(Note that one could replace $\Theta$ with $\Upsilon$ in \eqref{continuum heisenberg master eq} since $\Theta$ and $\Upsilon$ differ only on sets of measure zero. On the other hand, in the discrete there are no non-empty sets of measure zero. In particular, the cases when a number of variables take the same value contribute non-vanishingly to the sum and the definition of $\Upsilon$ ensures that these contributions are weighted correctly.)

To evaluate \eqref{eq: Heisenberg n order discrete}, we apply the result of \cite{Transtrum:2005} which expresses the commutator of functions of non-commuting operators as an expansion in powers of the commutators of these operators with each other. In our case, the operators are the fields $\phi(1),\ldots, \phi(x-1)$ which satisfy the commutation relations \eqref{peierls bracket}. Following \cite{Dickinson:2013lsa} we define,
\begin{equation}
\mathcal{F}_n(\phi(x),\phi(z_1),\ldots,\phi(z_n))=\bigg[\ldots\bigg[\bigg[\phi(x),\mathcal{H}(z_1)\bigg],\mathcal{H}(z_2)\bigg]\ldots,\mathcal{H}(z_n)\bigg],
\end{equation}
and note that $\mathcal{F}_n=\bigg[\mathcal{F}_{n-1},\mathcal{H}(z_n)\bigg]$. This enables us to apply the result of \cite{Transtrum:2005} as, 
\begin{equation}\begin{split} \label{commutator_identity}
\mathcal{F}_n=& - \underbrace{\sum_{k_{1}}\cdots \sum_{k_{n-1}}\sum_{k_{x}}}_{k\equiv \sum k_i \not=0} \bigg( \frac{(-i\Delta_{x,z_n})^{k_x}}{k_x!}\prod_{i=1}^{n-1} \frac{(-i\Delta_{z_{i},z_n})^{k_i}}{k_i!}\bigg) \partial_{\phi(z_1)}^{k_1}\cdots\partial_{\phi(z_{n-1})}^{k_{n-1}}\partial_{\phi(x)}^{k_x} \mathcal{F}_{n-1} \partial_{\phi(z_n)}^k \mathcal{H}(z_n),\\
\end{split}
\end{equation}
where the sums are over all non-negative integers with the restriction that at least one of the $k_i$ is non-zero. This expansion has a useful diagrammatic representation which we give below.

\paragraph{The diagrammatic expansion for the nested commutator}\label{diagrammatic expansion for the nested commutator} 
Associate a vertex with each of the points $x,z_1\ldots ,z_n$. We call $x$ the external vertex and $z_i$ the internal vertices. The number of half-legs meeting at each vertex is equal to the number of fields at the associated point. To form the diagrams, connect the half-legs in all possible ways to form directed edges with the following properties: $(i)$ every internal vertex is connected to the external vertex by at least one directed path, and $(ii)$ every directed edge is of the form $z_i\rightarrow x$ or $z_i\rightarrow z_j$ with $i>j$. Property $(i)$ corresponds to the restriction that at least one of the $k_i$ is non-zero. Property $(ii)$ reflects the factors of $\Delta_{x,z_n}$ and $\Delta_{z_i,z_n}$ in \eqref{commutator_identity}, in particular the fact that $z_n$ is always the second argument. Connecting the legs corresponds to the taking of derivatives and each possible diagram corresponds to a different set of values for $k_1,\ldots, k_{n-1},k_x$. The nested commutator in \eqref{eq: Heisenberg n order discrete} is equal to the sum of these diagrams when each diagram is interpreted as follows:
\begin{itemize}
\item each directed edge $a\rightarrow b$ gives a factor of $-i\Delta_{ba}$,
\item each internal vertex contributes a factor of the coupling and any other constants appearing in $\mathcal{H}$,
\item each uncontracted half-leg at a vertex contributes a field operator at the associated point; the product of these factors is ordered with $\phi(x)$ on the left and then factors of $\phi(z_i)$ with $i$ increasing from left to right,
\item multiply by an overall factor coming from the different ways of connecting the half-legs to form the diagram,
\item multiply by an overall factor of $(-1)^n$.
\end{itemize}
For example, when $\mathcal{H}=\frac{g}{3!}\phi^3$ the Heisenberg field is given by,
\begin{equation}\label{phi3 example}
\begin{split}
 \phi^H(x) =&
 \begin{tikzpicture}[baseline=(current bounding box.center), decoration={markings,
    mark= at position 0.6 with {\arrow{Straight Barb[ length=1mm, width=1mm]}}}
] 
  \tikzstyle{every node}=[circle, draw, fill=black,inner sep=0pt, minimum width=3pt]
  \node (n3) at (9,0) [label= right: $x$]{};
    \draw [](9,-0.5) -- (n3)  {} ;
\end{tikzpicture}
 -i \sum_{z_1=1}^{x-1}
\begin{tikzpicture}
[baseline=(current bounding box.center),decoration={markings, 
    mark= at position 0.8 with {\arrow{Straight Barb[ length=1mm, width=1mm]}}}
] 
  \tikzstyle{every node}=[circle, draw, fill=black,inner sep=0pt, minimum width=3pt]
  \node (n3) at (9,0.5) [label= right: $x$]{};
  \node(n2) at (9,0)  [label= right: $z_1$]{};
    \draw [postaction={decorate}](n2) -- (n3)  {} ;
    \draw [](8.5,-0.5) -- (n2)  {} ;
     \draw [](9.5,-0.5) -- (n2)  {} ;
\end{tikzpicture}
+(-i)^2\sum_{z_1,z_2=1}^{x-1}\left(
\begin{tikzpicture}
[baseline=(current bounding box.center),decoration={markings, 
    mark= at position 0.6 with {\arrow{Straight Barb[ length=1mm, width=1mm]}}}
] 
  \tikzstyle{every node}=[circle, draw, fill=black,inner sep=0pt, minimum width=3pt]
  \node[baseline](n3) at (9,0.5) [label= right: $x$]{};
  \node (n2) at (9,0)  [label= right: $z_1$]{};
   \node (n1) at (9,-0.5)  [label= right: $z_2$]{};
    \draw [postaction={decorate}](n2) -- (n3)  {} ;
    \draw [] (8.5,-0.5) -- (n2)  {} ;
 \draw [](8.5,-1) -- (n1)  {} ;
     \draw [](9.5,-1) -- (n1)  {} ;
     \draw [postaction={decorate}](n1) -- (n2)  {} ;
\end{tikzpicture}
+
\begin{tikzpicture}
[baseline=(current bounding box.center),decoration={markings, 
    mark= at position 0.6 with {\arrow{Straight Barb[ length=1mm, width=1mm]}}}
] 
  \tikzstyle{every node}=[circle, draw, fill=black,inner sep=0pt, minimum width=3pt]
  \node[align=left] (n3) at (9,1) [label= right: $x$]{};
  \node[align=left] (n2) at (9,0.5)  [label= right: $z_1$]{};
   \node[align=left]  (n1) at (9,0)  [label= right: $z_2$]{};
    \draw [postaction={decorate}](n2) -- (n3)  {} ;
    \draw [postaction={decorate}](n1) to[bend right] (n2)  {} ;
     \draw [postaction={decorate}](n1) to[bend left] (n2)  {} ;
     \draw [](9,-0.5) -- (n1)  {} ;
\end{tikzpicture}\right) \Upsilon(z_1, z_2)+\cdots
\nonumber
\\
=& \phi(x)+\frac{g}{2} \, \sum_{z_1=1}^{x-1}\Delta^R_{x,z_1}\phi(z_1)^2 +\frac{g^2}{2}\sum_{z_1,z_2=1}^{x-1}\Delta^R_{x,z_1}\Delta^R_{z_1,z_2}\big( \phi(z_1)\phi(z_2)^2-2i\Delta^R_{z_1,z_2}\phi(z_2)\big)\Upsilon(z_1, z_2)+\cdots,\\
\end{split}
\end{equation}
where in the second line we used the relation, \begin{equation}
    \Delta_{z_i z_j}\Upsilon(z_1,\ldots,z_n)=\Delta^R_{z_i z_j}\Upsilon(z_1,\ldots,z_n)
\end{equation} to replace the Pauli-Jordan $\Delta$ by the retarded propagator $\Delta^R$. In this work, nested commutators always appear together with the ordering function $\Upsilon$ and therefore we always interpret the directed edges as retarded propagators.

The field expansion \eqref{eq: Heisenberg n order discrete} terminates at a finite order in the interaction coupling. This order increases with the order of the interaction Hamiltonian and with the level of $x$ (the length of longest chain $x$ is a maximal element of, see Section \ref{subsec: causets}).
This becomes particularly transparent when considering the diagrammatic representation for $\phi^H(x)$. Suppose $x$ is at level 3 with $x\succ x_1\succ x_2$, and let $\mathcal{H}=\frac{g}{4!}\phi^4$. The diagram which contains the largest number of interaction vertices is, \begin{equation}\begin{tikzpicture}
[baseline=(current bounding box.center),decoration={markings, 
    mark= at position 0.6 with {\arrow{Straight Barb[ length=1mm, width=1mm]}}}
] 
  \tikzstyle{every node}=[circle, draw, fill=black,inner sep=0pt, minimum width=3pt]
  \node (n3) at (9,0.7) [label= right: $x$]{};
  \node(n2) at (9,0)  [label= right: $x_1$]{};
   \node(n1) at (9,-0.7)  [label= right: $x_2$]{};
   \node(n0) at (8.3,-0.7)  [label= left: $x_2$]{};
   \node(nx) at (9.8,-0.7)  [label= right: $x_2$]{};
    \draw [postaction={decorate}](n2) -- (n3)  {} ;
    \draw [postaction={decorate}](n1)  -- (n2)  {} ;
    \draw [postaction={decorate}](n0)  -- (n2)  {} ;
    \draw [postaction={decorate}](nx)  -- (n2)  {} ;
    
     \draw [](9.2,-1.4)  --  (n1)  {} ;
     \draw [](9,-1.4)  --  (n1)  {} ;
     \draw [](8.8,-1.4)  --  (n1)  {} ;

     \draw [](8.1,-1.4)  --  (n0)  {} ;
     \draw [](8.3,-1.4)  --  (n0)  {} ;
     \draw [](8.5,-1.4)  --  (n0)  {} ;
     
     \draw [](10,-1.4)  --  (nx)  {} ;
     \draw [](9.8,-1.4)  --  (nx)  {} ;
     \draw [](9.6,-1.4)  --  (nx)  {} ;
\end{tikzpicture}=\frac{g^4}{216} \Delta^R_{x,x_1}(\Delta^R_{x_1,x_2})^3\phi(x_2)^9,\end{equation}
where $z_1=x_1$ and $z_2=z_3=z_4=x_2$. Since $x_2$ is a minimal element, there is no way to add another internal vertex to produce a non-vanishing diagram. This illustrates the general pattern: when $\mathcal{H}\sim \phi^r$, the diagrams with the highest number of interaction vertices are those in which the directed edges form a tree with 1 interaction vertex as the root, $r-1$ interaction vertices directly below it, $(r-1)^2$ interaction vertices below them \textit{etc.} with the last layer containing $(r-1)^{l-2}$ vertices, where $l$ denotes the level of $x$. Hence, the expansion terminates at order $\sum_{k=0}^{l-2}{(r-1)^{k}}=\frac{(r-1)^{l-1}-1}{r-2}$ in the coupling constant (which is of order $(r-1)^{l-1}$ in the field, \textit{i.e.} $\phi^{(r-1)^{l-1}}$).

\subsection{Properties of the field algebras}\label{sec:algebras}
Here we list some properties of the Heisenberg and interaction picture fields on a causal set.
\paragraph{Causality.} Heisenberg fields at spacelike separated points commute,
\begin{equation}
    [\phi^H(x),\phi^H(y)]=0 \text{ for all }x,y\in \mathcal{C} \text{ with } x\natural y.
\end{equation}
This follows from the definition \eqref{Heisenberg U} of the Heisenberg field and the Peierls brackets \eqref{peierls bracket}.

\paragraph{Polynomial property.} As we saw in Section \ref{subsec: expand H field}, the Heisenberg field $\phi^H(x)$ can be written as, \begin{equation}\label{polynomial property H field}
    \phi^H(x)=\phi(x)+Q_x(\phi(y); y\prec x),
\end{equation}
where $Q_x$ is a finite order polynomial in the interaction picture fields in the past of $x$. We can also invert this relationship and write,
 \begin{equation}\label{polynomial property interaction field}
    \phi(x)=\phi^H(x)+P_x(\phi^H(y); y\prec x),
\end{equation}
where $P_x$ is a finite order polynomial in the Heisenberg fields in the past of $x$. To see this, suppose \eqref{polynomial property interaction field} is true for all $x$ in levels $0,1,\ldots, l$ and consider $x$ at level $l+1$. Then $Q_x$ is a polynomial in fields $\phi(y)$ with $y$ at level $l$ or below. By the inductive assumption, these $\phi(y)$ can be written as a finite order polynomial in Heisenberg picture fields. Defining $P_x(\phi^H(y); y\prec x)=-Q_x(\phi(y); y\prec x)$ and rearranging \eqref{polynomial property H field} for $\phi_x$ gives the result.

\paragraph{Observable algebras.} Given a causet $\mathcal{C}$ and a subcauset $\mathcal{R}\subseteq \mathcal{C}$, we write $\mathfrak{A}_\mathcal{R}^H$ and $\mathfrak{A}_\mathcal{R}$ denote the algebras generated by $\{\phi^H(x)\}_{x\in\mathcal{R}}$ and $\{\phi(x)\}_{x\in\mathcal{R}}$, respectively. 
Then it is a corollary of the polynomial property that $\mathfrak{A}_\mathcal{R}^H=\mathfrak{A}_\mathcal{R}$ if and only if $\mathcal{R}$ is a stem in $\mathcal{C}$.

\section{The in-in formalism}\label{sec: in in}
Here we build on the results of Section \ref{subsec: expand H field} to obtain a diagrammatic expansion for in-in correlators on a causal set. Throughout, we assume that the in-state is a Gaussian state so Wick's theorem applies.

\subsection{The expectation value of the field}\label{subsec: field exp value}
We begin with the simplest in-in correlator: the expectation value of the field, $ \langle \phi^H(x) \rangle$. Having expanded $\phi^H(x)$ in Section \ref{subsec: expand H field} as a causally ordered polynomial in the interaction picture fields, we can apply Wick's theorem to obtain $ \langle \phi^H(x) \rangle$. Diagrammatically, this corresponds to taking each diagram in the expansion of $\phi^H(x)$ and joining its half-legs to create undirected edges in all possible ways.  Each undirected edge is a Feynman propagator, and each diagram is weighted by an additional Wick factor. Assuming $\langle \phi(x) \rangle=0$, only the diagrams in which all half-legs are contracted contribute to the expectation value $\langle \phi^H(x) \rangle$.
For example, when $\mathcal{H}=\frac{g}{3!}\phi^3$ we apply Wick's theorem to \eqref{phi3 example} and obtain,\begin{equation}\label{exp phi3 example}
\begin{split}
 \langle\phi^H(x)\rangle =&
 -i \sum_{z_1=1}^{x-1}
\begin{tikzpicture}
[baseline=(current bounding box.center),decoration={markings, 
    mark= at position 0.6 with {\arrow{Straight Barb[ length=1mm, width=1mm]}}}
] 
  \tikzstyle{every node}=[circle, draw, fill=black,inner sep=0pt, minimum width=3pt]
  \node (n3) at (9,0.7) [label= right: $x$]{};
  \node(n2) at (9,0)  [label= right: $z_1$]{};
    \draw [postaction={decorate}](n2) -- (n3)  {} ;
  \draw (n2) to [loop below, scale=1.5, looseness=20,min distance=5mm] (n2){};
\end{tikzpicture} 
\\+&
(-i)^3 \sum_{z_1,z_2,z_3=1}^{x-1}
\left(\begin{tikzpicture}
[baseline=(current bounding box.center),decoration={markings, 
    mark= at position 0.6 with {\arrow{Straight Barb[ length=1mm, width=1mm]}}}
] 
  \tikzstyle{every node}=[circle, draw, fill=black,inner sep=0pt, minimum width=3pt]
  \node (n3) at (9,0.7) [label= right: $x$]{};
  \node(n2) at (9,0)  [label= right: $z_1$]{};
   \node(n1) at (9,-0.7)  [label= right: $z_2$]{};
    \node(n0) at (9,-1.4)  [label= right: $z_3$]{};
    \draw [postaction={decorate}](n2) -- (n3)  {} ;
    \draw [postaction={decorate}](n1) to[bend left] (n2)  {} ;
    \draw (n1)to[bend right] (n2)  {} ;
    \draw [postaction={decorate}](n0) -- (n1)  {} ;
  \draw (n0) to [loop below, scale=1.5, looseness=5,min distance=5mm] (n0){};
\end{tikzpicture}+\begin{tikzpicture}
[baseline=(current bounding box.center),decoration={markings, 
    mark= at position 0.6 with {\arrow{Straight Barb[ length=1mm, width=1mm]}}}
] 
  \tikzstyle{every node}=[circle, draw, fill=black,inner sep=0pt, minimum width=3pt]
  \node (n3) at (9,0.7) [label= right: $x$]{};
  \node(n2) at (9,0)  [label= right: $z_1$]{};
   \node(n1) at (9,-0.7)  [label= right: $z_2$]{};
    \node(n0) at (9,-1.4)  [label= right: $z_3$]{};
    \draw [postaction={decorate}](n2) -- (n3)  {} ;
    \draw [postaction={decorate}](n1) to[bend left] (n2)  {} ;
    \draw [postaction={decorate}](n1)to[bend right] (n2)  {} ;
    \draw [postaction={decorate}](n0) -- (n1)  {} ;
  \draw (n0) to [loop below, scale=1.5, looseness=5,min distance=5mm] (n0){};
\end{tikzpicture}+
\begin{tikzpicture}
[baseline=(current bounding box.center),decoration={markings, 
    mark= at position 0.6 with {\arrow{Straight Barb[ length=1mm, width=1mm]}}}
] 
  \tikzstyle{every node}=[circle, draw, fill=black,inner sep=0pt, minimum width=3pt]
  \node (n3) at (9,0) [label= right: $x$]{};
  \node(n2) at (9,-0.7)  [label= right: $z_1$]{};
   \node(n1) at (8.8,-1.4)  [label= left: $z_2$]{};
    \node(n0) at (9.2,-1.4)  [label= right: $z_3$]{};
    \draw [postaction={decorate}](n2) -- (n3)  {} ;
    \draw [postaction={decorate}](n1) -- (n2)  {} ;
     \draw [postaction={decorate}](n0)-- (n2)  {} ;
  \draw (n1) to [loop below, scale=1.5, looseness=5,min distance=5mm] (n1){};
  \draw (n0) to [loop below, scale=1.5, looseness=5,min distance=5mm] (n0){};
\end{tikzpicture}+
\begin{tikzpicture}
[baseline=(current bounding box.center),decoration={markings, 
    mark= at position 0.6 with {\arrow{Straight Barb[ length=1mm, width=1mm]}}}
] 
  \tikzstyle{every node}=[circle, draw, fill=black,inner sep=0pt, minimum width=3pt]
  \node (n3) at (9,0.7) [label= right: $x$]{};
  \node(n2) at (9,0)  [label= right: $z_1$]{};
   \node(n1) at (9,-0.7)  [label= right: $z_2$]{};
    \node(n0) at (9,-1.4)  [label= right: $z_3$]{};
    \draw [postaction={decorate}](n2) -- (n3)  {} ;
    \draw [postaction={decorate}](n1) -- (n2)  {} ;
     \draw [postaction={decorate}](n0)to[bend left] (n2)  {} ;
  \draw (n0) -- (n1){};
  \draw (n0) to[bend right] (n1){};
\end{tikzpicture}+
\begin{tikzpicture}
[baseline=(current bounding box.center),decoration={markings, 
    mark= at position 0.6 with {\arrow{Straight Barb[ length=1mm, width=1mm]}}}
] 
  \tikzstyle{every node}=[circle, draw, fill=black,inner sep=0pt, minimum width=3pt]
  \node (n3) at (9,0.7) [label= right: $x$]{};
  \node(n2) at (9,0)  [label= right: $z_1$]{};
   \node(n1) at (9,-0.7)  [label= right: $z_2$]{};
    \node(n0) at (9,-1.4)  [label= right: $z_3$]{};
    \draw [postaction={decorate}](n2) -- (n3)  {} ;
    \draw [postaction={decorate}](n1) -- (n2)  {} ;
     \draw [postaction={decorate}](n0)to[bend left] (n2)  {} ;
  \draw [postaction={decorate}](n0) -- (n1){};
  \draw (n0) to[bend right] (n1){};
\end{tikzpicture}+
\begin{tikzpicture}
[baseline=(current bounding box.center),decoration={markings, 
    mark= at position 0.6 with {\arrow{Straight Barb[ length=1mm, width=1mm]}}}
] 
  \tikzstyle{every node}=[circle, draw, fill=black,inner sep=0pt, minimum width=3pt]
  \node (n3) at (9,0.7) [label= right: $x$]{};
  \node(n2) at (9,0)  [label= right: $z_1$]{};
   \node(n1) at (9,-0.7)  [label= right: $z_2$]{};
    \node(n0) at (9,-1.4)  [label= right: $z_3$]{};
    \draw [postaction={decorate}](n2) -- (n3)  {} ;
    \draw [postaction={decorate}](n1) -- (n2)  {} ;
     \draw [postaction={decorate}](n0)to[bend left] (n2)  {} ;
  \draw [postaction={decorate}](n0) -- (n1){};
  \draw [postaction={decorate}](n0) to[bend right] (n1){};
\end{tikzpicture}+
\begin{tikzpicture}
[baseline=(current bounding box.center),decoration={markings, 
    mark= at position 0.6 with {\arrow{Straight Barb[ length=1mm, width=1mm]}}}
] 
  \tikzstyle{every node}=[circle, draw, fill=black,inner sep=0pt, minimum width=3pt]
  \node (n3) at (9,0.7) [label= right: $x$]{};
  \node(n2) at (9,0)  [label= right: $z_1$]{};
   \node(n1) at (9,-0.7)  [label= right: $z_2$]{};
    \node(n0) at (9,-1.4)  [label= right: $z_3$]{};
    \draw [postaction={decorate}](n2) -- (n3)  {} ;
    \draw [postaction={decorate}](n1) -- (n2)  {} ;
     \draw [](n0)to[bend left] (n2)  {} ;
  \draw [postaction={decorate}](n0) -- (n1){};
  \draw [](n0) to[bend right] (n1){};
\end{tikzpicture}+
\begin{tikzpicture}
[baseline=(current bounding box.center),decoration={markings, 
    mark= at position 0.6 with {\arrow{Straight Barb[ length=1mm, width=1mm]}}}
] 
  \tikzstyle{every node}=[circle, draw, fill=black,inner sep=0pt, minimum width=3pt]
  \node (n3) at (9,0.7) [label= right: $x$]{};
  \node(n2) at (9,0)  [label= right: $z_1$]{};
   \node(n1) at (9,-0.7)  [label= right: $z_2$]{};
    \node(n0) at (9,-1.4)  [label= right: $z_3$]{};
    \draw [postaction={decorate}](n2) -- (n3)  {} ;
    \draw [postaction={decorate}](n1) -- (n2)  {} ;
     \draw [](n0)to[bend left] (n2)  {} ;
  \draw [postaction={decorate}](n0) -- (n1){};
  \draw [postaction={decorate}](n0) to[bend right] (n1){};
\end{tikzpicture}
\right)\Upsilon(z_1, z_2 ,z_3)+\cdots\\
\end{split}
\end{equation}
Note that thus far all our diagrams are labelled---this was necessary in the expansion of the Heisenberg field because the vertex labels kept track of the order of the field operators in the expansion. But in computing the scalar quantity $ \langle \phi^H(x) \rangle$ this is no longer necessary and we now proceed to obtain an expansion in terms of unlabelled diagrams akin to the Feynman diagrams of the continuum.

Consider some labelled diagram $G_n$ with $n$ internal vertices which appears in the expansion of $\langle \phi^H(x)\rangle $ and permute the labels of its internal vertices to produce another diagram $G_n'$. If $G_n'$ does not appears in the expansion of $\langle \phi^H(x)\rangle$, then it must contain a directed edge $z_i\rightarrow z_j$ (which we interpret as $-i\Delta^R_{z_j,z_i}$) for some $i<j$. We conclude that the product $G_n'\Upsilon(z_1,\ldots, z_n)$ vanishes, since $\Upsilon(z_1,\ldots, z_n)$ vanishes when $z_i<z_j$ and $\Delta^R_{z_jz_i}$ vanishes when $z_i\geq z_j$. Using this insight we may write $\langle \phi^H(x)\rangle$ as, 
    \begin{equation}\label{eq in-in correlator D}
    \begin{split}
  &\langle \phi^H(x)\rangle =\sum_{n=0}^{\infty} \sum_{G_n}(-i)^n\sum_{z_1\cdots z_{n}=1}^{x_1-1}G_n \ \Upsilon(z_1\cdots z_n),\\
    \end{split}
    \end{equation}
where now $G_n$ denotes either an allowed labelled diagram or a diagram obtained from one via a permutation of the internal vertex labels. Now, note that the internal vertices are simply dummy indices all of which are summed over the same domain. Therefore, for each isomorphism-class $[G_n]$ choose some representative $G_n$, and for each diagram $G'_n\in[G_n]$ apply the necessary coordinate transformation $(z_1\ldots z_n)\Gto (z_1\ldots z_n)$ to bring $G'_n$ into the labelling of $G_n$. Collecting the identical diagrams and taking into account the action of each coordinate transformation on $\Upsilon$, we obtain, 
 \begin{equation}\label{eq in-in correlator [G] 1field}
    \begin{split}
  &\langle \phi^H(x)\rangle =\sum_{n=0}^{\infty}\sum_{[G_n]} (-i)^n\sum_{z_1\cdots z_{n}=1}^{x_1-1} \frac{G_n}{|Aut(G_n)|} \ \sum_{\pi}\Upsilon(\pi),
    \end{split}
    \end{equation}
where $[G_n]$ denotes an unlablled diagram, $G_n$ denotes a labelled diagram representative of $[G_n]$, $Aut(G_n)$ is the group of automorphisms of $G_n$ which keep the $x$ vertex fixed, $\pi=z_{i_1},\ldots, z_{i_n}$ is a permutation of $z_1,\ldots, z_n$ and,
\begin{equation}
  \Upsilon(z_{i_1},\ldots, z_{i_n})= \begin{cases}
      \frac{1}{l_1!\ldots l_k!} & z_{i_1}\geq z_{i_2}\geq \ldots\geq  z_{i_n}\\
      0 & \text{otherwise,}\\
    \end{cases}    
\end{equation}
where $k$ denotes the number of strings of equal signs in $z_{i_1}\geq z_{i_2}\geq \ldots\geq  z_{i_n}$ and $l_i$ is the length of the $i^{th}$ string of equal signs. Noting that for any set of values for $z_1,\ldots, z_n$, we have $\sum_{\pi}\Upsilon(\pi)=1$, expansion \eqref{eq in-in correlator [G] 1field} simplifies to a sum over unlabelled diagrams $[G_n]$, \begin{equation}\label{eq in-in correlator [G] final 1field}
    \begin{split}
  &\langle \phi^H(x)\rangle=\sum_{n=0}^{\infty}\sum_{[G_n]}[G_n],
    \end{split}
    \end{equation}
where each $[G_n]$ is assigned a value according to the diagrammatic rules summarised below. 

\paragraph{The diagrammatic expansion for $\langle \phi^H(x)\rangle$} The diagrams which contribute to the expansion at order $n$ contain a single external vertex labelled $x$ and $n$ unlabelled internal vertices. The valency of the internal vertices is given by the order of the Hamiltonian in the field and all half-legs are contracted to form legs. Each internal vertex is connected to $x$ by at least one directed path and there are no closed directed cycles. The value of each diagram is obtained by assigning each internal vertex with a dummy index $z$ and following the rules:
\begin{itemize}
\item each internal vertex $z$ gives a sum $i\sum_{z=1}^{x-1}$ and a factor of the coupling and any other constants appearing in the interaction Hamiltonian $\mathcal{H}$,
\item each directed edge $a\rightarrow b$ gives a factor of $-i\Delta^R_{ba}$,
\item  each undirected edge $\begin{tikzpicture}[baseline={(0,-.5ex)}]\node[align=left] at (-0.1,0) {$a$};\draw[black, thick]  (0.15,0)  --  (.85,0);\node[align=left] at (1.,0.0) {$b$};\end{tikzpicture}$ gives a factor of $\Delta^F_{ab}$,
\item multiply by an overall factor coming from the different ways of connecting the half-legs to form the diagram,
\item divide by ${|Aut(G_n)|}$, the number of automorphisms which keep $x$ fixed.
\end{itemize}

For example, expansion \eqref{exp phi3 example} can be rewritten in terms of unlabelled diagrams as,
\begin{equation}\label{exp phi3 unlabelled example}
\langle \phi^H(x)\rangle =
\begin{tikzpicture}
[baseline=(current bounding box.center),decoration={markings, 
    mark= at position 0.6 with {\arrow{Straight Barb[ length=1mm, width=1mm]}}}
] 
  \tikzstyle{every node}=[circle, draw, fill=black,inner sep=0pt, minimum width=3pt]
  \node (n3) at (9,0.7) [label= right: $x$]{};
  \node(n2) at (9,0)  []{};
    \draw [postaction={decorate}](n2) -- (n3)  {} ;
  \draw (n2) to [loop below, scale=1.5, looseness=20,min distance=5mm] (n2){};
\end{tikzpicture} +
\begin{tikzpicture}
[baseline=(current bounding box.center),decoration={markings, 
    mark= at position 0.6 with {\arrow{Straight Barb[ length=1mm, width=1mm]}}}
] 
  \tikzstyle{every node}=[circle, draw, fill=black,inner sep=0pt, minimum width=3pt]
  \node (n3) at (9,0.7) [label= right: $x$]{};
  \node(n2) at (9,0)  []{};
   \node(n1) at (9,-0.7)  []{};
    \node(n0) at (9,-1.4)  []{};
    \draw [postaction={decorate}](n2) -- (n3)  {} ;
    \draw [postaction={decorate}](n1) to[bend left] (n2)  {} ;
    \draw (n1)to[bend right] (n2)  {} ;
    \draw [postaction={decorate}](n0) -- (n1)  {} ;
  \draw (n0) to [loop below, scale=1.5, looseness=5,min distance=5mm] (n0){};
\end{tikzpicture}+\begin{tikzpicture}
[baseline=(current bounding box.center),decoration={markings, 
    mark= at position 0.6 with {\arrow{Straight Barb[ length=1mm, width=1mm]}}}
] 
  \tikzstyle{every node}=[circle, draw, fill=black,inner sep=0pt, minimum width=3pt]
  \node (n3) at (9,0.7) [label= right: $x$]{};
  \node(n2) at (9,0)  []{};
   \node(n1) at (9,-0.7)  []{};
    \node(n0) at (9,-1.4)  []{};
    \draw [postaction={decorate}](n2) -- (n3)  {} ;
    \draw [postaction={decorate}](n1) to[bend left] (n2)  {} ;
    \draw [postaction={decorate}](n1)to[bend right] (n2)  {} ;
    \draw [postaction={decorate}](n0) -- (n1)  {} ;
  \draw (n0) to [loop below, scale=1.5, looseness=5,min distance=5mm] (n0){};
\end{tikzpicture}+
\begin{tikzpicture}
[baseline=(current bounding box.center),decoration={markings, 
    mark= at position 0.6 with {\arrow{Straight Barb[ length=1mm, width=1mm]}}}
] 
  \tikzstyle{every node}=[circle, draw, fill=black,inner sep=0pt, minimum width=3pt]
  \node (n3) at (9,0) [label= right: $x$]{};
  \node(n2) at (9,-0.7)  []{};
   \node(n1) at (8.5,-1.4)  []{};
    \node(n0) at (9.5,-1.4)  []{};
    \draw [postaction={decorate}](n2) -- (n3)  {} ;
    \draw [postaction={decorate}](n1) -- (n2)  {} ;
     \draw [postaction={decorate}](n0)-- (n2)  {} ;
  \draw (n1) to [loop below, scale=1.5, looseness=5,min distance=5mm] (n1){};
  \draw (n0) to [loop below, scale=1.5, looseness=5,min distance=5mm] (n0){};
\end{tikzpicture}+
\begin{tikzpicture}
[baseline=(current bounding box.center),decoration={markings, 
    mark= at position 0.6 with {\arrow{Straight Barb[ length=1mm, width=1mm]}}}
] 
  \tikzstyle{every node}=[circle, draw, fill=black,inner sep=0pt, minimum width=3pt]
  \node (n3) at (9,0.7) [label= right: $x$]{};
  \node(n2) at (9,0)  []{};
   \node(n1) at (9,-0.7)  []{};
    \node(n0) at (9,-1.4)  []{};
    \draw [postaction={decorate}](n2) -- (n3)  {} ;
    \draw [postaction={decorate}](n1) -- (n2)  {} ;
     \draw [postaction={decorate}](n0)to[bend left] (n2)  {} ;
  \draw (n0) -- (n1){};
  \draw (n0) to[bend right] (n1){};
\end{tikzpicture}+
\begin{tikzpicture}
[baseline=(current bounding box.center),decoration={markings, 
    mark= at position 0.6 with {\arrow{Straight Barb[ length=1mm, width=1mm]}}}
] 
  \tikzstyle{every node}=[circle, draw, fill=black,inner sep=0pt, minimum width=3pt]
  \node (n3) at (9,0.7) [label= right: $x$]{};
  \node(n2) at (9,0)  []{};
   \node(n1) at (9,-0.7)  []{};
    \node(n0) at (9,-1.4)  []{};
    \draw [postaction={decorate}](n2) -- (n3)  {} ;
    \draw [postaction={decorate}](n1) -- (n2)  {} ;
     \draw [postaction={decorate}](n0)to[bend left] (n2)  {} ;
  \draw [postaction={decorate}](n0) -- (n1){};
  \draw (n0) to[bend right] (n1){};
\end{tikzpicture}+
\begin{tikzpicture}
[baseline=(current bounding box.center),decoration={markings, 
    mark= at position 0.6 with {\arrow{Straight Barb[ length=1mm, width=1mm]}}}
] 
  \tikzstyle{every node}=[circle, draw, fill=black,inner sep=0pt, minimum width=3pt]
  \node (n3) at (9,0.7) [label= right: $x$]{};
  \node(n2) at (9,0)  []{};
   \node(n1) at (9,-0.7)  []{};
    \node(n0) at (9,-1.4)  []{};
    \draw [postaction={decorate}](n2) -- (n3)  {} ;
    \draw [postaction={decorate}](n1) -- (n2)  {} ;
     \draw [postaction={decorate}](n0)to[bend left] (n2)  {} ;
  \draw [postaction={decorate}](n0) -- (n1){};
  \draw [postaction={decorate}](n0) to[bend right] (n1){};
\end{tikzpicture}+
\begin{tikzpicture}
[baseline=(current bounding box.center),decoration={markings, 
    mark= at position 0.6 with {\arrow{Straight Barb[ length=1mm, width=1mm]}}}
] 
  \tikzstyle{every node}=[circle, draw, fill=black,inner sep=0pt, minimum width=3pt]
  \node (n3) at (9,0.7) [label= right: $x$]{};
  \node(n2) at (9,0)  []{};
   \node(n1) at (9,-0.7)  []{};
    \node(n0) at (9,-1.4)  []{};
    \draw [postaction={decorate}](n2) -- (n3)  {} ;
    \draw [postaction={decorate}](n1) -- (n2)  {} ;
     \draw [](n0)to[bend left] (n2)  {} ;
  \draw [postaction={decorate}](n0) -- (n1){};
  \draw [](n0) to[bend right] (n1){};
\end{tikzpicture}+
\begin{tikzpicture}
[baseline=(current bounding box.center),decoration={markings, 
    mark= at position 0.6 with {\arrow{Straight Barb[ length=1mm, width=1mm]}}}
] 
  \tikzstyle{every node}=[circle, draw, fill=black,inner sep=0pt, minimum width=3pt]
  \node (n3) at (9,0.7) [label= right: $x$]{};
  \node(n2) at (9,0)  []{};
   \node(n1) at (9,-0.7)  []{};
    \node(n0) at (9,-1.4)  []{};
    \draw [postaction={decorate}](n2) -- (n3)  {} ;
    \draw [postaction={decorate}](n1) -- (n2)  {} ;
     \draw [](n0)to[bend left] (n2)  {} ;
  \draw [postaction={decorate}](n0) -- (n1){};
  \draw [postaction={decorate}](n0) to[bend right] (n1){};
\end{tikzpicture}
+\cdots
\end{equation}

\subsection{In-in correlators}\label{subsec: in in correlators}
In Section \ref{subsec: field exp value} we gave diagrammatic rules for computing the expectation value of the Heisenberg field, $\langle \phi^H(x)\rangle$. Here, we generalise these rules to in-in correlators of causally ordered products of local operators. 

Given some causet $\mathcal{C}$ and an integer $1\leq k\leq |\mathcal{C}|$, let $x_1,x_2,\ldots ,x_k$ denote integers satisfying $|\mathcal{C}|\geq x_1>x_2\ldots >x_k\geq 1$. For each $i=1,\ldots, k$ we write $\mathcal{O}^H(x_i)$ to denote a local Heisenberg operator at $x_i$. We allow for the dependence of $\mathcal{O}^H(x_i)$ on $\phi(x_i)$ to be different to the dependence of $\mathcal{O}^H(x_j)$ on $\phi(x_j)$, but we restricts ourselves to operators which are finite-order polynomials in the Heisenberg fields, \textit{e.g.} $\mathcal{O}^H(x_2)=3(\phi^H(x_2))^4$. We seek to compute the in-in correlator,
\begin{equation}
       \langle \mathcal{O}^H(x_1)\ldots \mathcal{O}^H(x_k)\rangle.
    \end{equation}

We proceed in two stages. First, we expand the operator product $\mathcal{O}^H(x_1)\ldots \mathcal{O}^H(x_k)$ as a sum of causally-ordered products of interaction picture fields. Second, we apply Wick's theorem to compute the expectation value of each term in the expansion.

Our strategy for obtaining the expansion of $\mathcal{O}^H(x_1)\ldots \mathcal{O}^H(x_k)$ is to express it as a nested product (cf. Lemma \ref{recursion lemma lemma}),
\begin{equation}\label{def recursive product} \mathcal{O}^H(x_1)\ldots \mathcal{O}^H(x_k)=U_{x_k}^{\dagger}\mathcal{O}_{1\ldots k} U_{x_k},
    \end{equation}
    where $\mathcal{O}_{1\ldots k}$ is defined recursively via,
    \begin{equation}\label{def O1...p}
\mathcal{O}_{1\ldots p} =
    \begin{cases}
      \mathcal{O}(x_{1}) & p=1\\
      U_{x_{p-1},x_p}^{\dagger}\mathcal{O}_{1\ldots p-1}U_{x_{p-1},x_p}\mathcal{O}(x_{p}) & 1<p\leq k.\\
    \end{cases}
\end{equation}
This enables us to obtain the product expansion by the recursive application of the relation (cf. Lemma \ref{lemma BCH appendix}), 
\begin{equation}\label{BCH section equation}\begin{split}
&U_{x,y}^{\dagger}\mathcal{O}U_{x,y} = \sum_{n=0}^{\infty} (-i)^n\sum_{z_1,\ldots z_n=y}^{x-1} \bigg[\ldots\bigg[\bigg[\mathcal{O},\mathcal{H}(z_1)\bigg],\mathcal{H}(z_2)\bigg]\ldots,\mathcal{H}(z_n)\bigg]\Upsilon(z_1,\ldots, z_n) ,\\
\end{split}
\end{equation}
where $\mathcal{O}$ is any (not necessarily local) operator and the $n=0$ term is understood to be equal to $\mathcal{O}$. We leave the result to the Appendix (cf. Corollary \ref{cor operator product expansion}), but note that inside the expectation value the expansion simplifies to (cf. Lemma \ref{lemma correlator form}),    
\begin{equation}\label{eq: product expansion1}
\begin{split}
&\langle \mathcal{O}^H(x_1)\ldots \mathcal{O}^H(x_k)\rangle \\
&\hspace{5mm}=\sum_{n=0}^{\infty}  (-i)^n\sum_{z_1,\ldots,z_n=1}^{x-1}\bigg\langle\bigg[\ldots\bigg[\bigg[\mathcal{O}(x_1)\ldots \mathcal{O}(x_k),\mathcal{H}(z_1)\bigg],\mathcal{H}(z_2)\bigg]\ldots,\mathcal{H}(z_n)\bigg]\bigg\rangle\Upsilon(z_1,\ldots, z_n),\\
\end{split}
\end{equation}
where on the RHS the operators $\mathcal{O}(x_i)$ are in the interaction picture and $\mathcal{H}$ is the interacting Hamiltonian in the interaction picture.

To evaluate the expectation value we apply Wick's theorem diagrammatically. The diagrammatic expansion of the nested correlator in \eqref{eq: product expansion1} is obtained by modifying the rules given in  \ref{diagrammatic expansion for the nested commutator} so that each diagram has $k$ external vertices labelled $x_1,\ldots, x_k$ and each internal vertex is connected to at least one external vertex by at least one directed path. Additionally, the number of half-legs at each external vertex is equal to the number of fields at that point (\textit{e.g.} 3 half-legs at $x_i$ if $\mathcal{O}(x_i)=(\phi^H(x_i))^3$) and there must be no outgoing edges from any of the external vertices. To apply Wick's theorem we connect the remaining half-legs into undirected edges. The resulting diagrams are labelled and we repeat the procedure outlined in Section \ref{subsec: field exp value} to obtained an expansion in terms of unlabelled diagram. 
The rules are obtained from those given in Section \ref{subsec: field exp value} by allowing for multiple external vertices and accounting for any additional factors they may carry. For completeness, the rules are presented below.

\paragraph{The diagrammatic expansion for $\langle \mathcal{O}^H(x_1)\ldots \mathcal{O}^H(x_k)\rangle $} The diagrams which contribute to the expansion at order $n$ contain $k$ external vertices labelled $x_1,\ldots, x_k$ and $n$ unlabelled internal vertices. The valency of the internal (external) vertices is given by the order of the Hamiltonian (local operator $\mathcal{O}$) in the field and all half-legs are contracted to form legs. Each internal vertex is connected to at least one external vertex by at least one directed path. There are no closed directed cycles and no edges directed outwards from an external vertex. The value of each diagram is obtained by assigning each internal vertex with a dummy index $z$ and following the rules:
\begin{itemize}
\item each internal vertex $z$ gives a sum $i\sum_{z=1}^{x-1}$ and a factor of the coupling and any other constants appearing in the interaction Hamiltonian $\mathcal{H}$,
\item each external vertex $x_i$ gives any constant factors appearing in $\mathcal{O}(x_i)$,
\item each directed edge $a\rightarrow b$ gives a factor of $-i\Delta^R_{ba}$,
\item  each undirected edge $\begin{tikzpicture}[baseline={(0,-.5ex)}]\node[align=left] at (-0.1,0) {$a$};\draw[black, thick]  (0.15,0)  --  (.85,0);\node[align=left] at (1.,0.0) {$b$};\end{tikzpicture}$ gives a factor of $\Delta^F_{ab}$,
\item multiply by an overall factor coming from the different ways of connecting the half-legs to form the diagram,
\item divide by ${|Aut(G_n)|}$, the number of automorphisms which keep $x_1,\ldots, x_k$ fixed.
\end{itemize}

See \cite{albertini:2021, Jubb:2023mlv} for an explicit computation of the two point function in $\phi^4$ theory. In $\phi^3$, the connected part of $\langle \phi^H(x_1)\phi^H(x_2) \phi^H(x_3)\rangle$ to first order in the coupling is given by,
\begin{equation}\label{3ptfn unlabelled example}\begin{split}
& 
\begin{tikzpicture}
[baseline=(current bounding box.center),decoration={markings, 
    mark= at position 0.6 with {\arrow{Straight Barb[ length=1mm, width=1mm]}}}
] 
  \tikzstyle{every node}=[circle, draw, fill=black,inner sep=0pt, minimum width=3pt]
  \node (n3) at (9,0.7) [label= above: $x_2$]{};
  \node (n2) at (8.5,0.7) [label= left: $x_1$]{};
  \node (n1) at (9.5,0.7) [label= right: $x_3$]{};
  \node(n0) at (9,0)  []{};
    \draw [postaction={decorate}](n0) -- (n3)  {} ;
    \draw [postaction={decorate}](n0) -- (n2)  {} ;
    \draw [postaction={decorate}](n0) -- (n1)  {} ; 
\end{tikzpicture} 
+
\begin{tikzpicture}
[baseline=(current bounding box.center),decoration={markings, 
    mark= at position 0.6 with {\arrow{Straight Barb[ length=1mm, width=1mm]}}}
] 
  \tikzstyle{every node}=[circle, draw, fill=black,inner sep=0pt, minimum width=3pt]
  \node (n3) at (9,0.7) [label= above: $x_2$]{};
  \node (n2) at (8.5,0.7) [label= left: $x_1$]{};
  \node (n1) at (9.5,0.7) [label= right: $x_3$]{};
  \node(n0) at (9,0)  []{};
    \draw [](n0) -- (n3)  {} ;
    \draw [postaction={decorate}](n0) -- (n2)  {} ;
    \draw [postaction={decorate}](n0) -- (n1)  {} ; 
\end{tikzpicture} 
+
\begin{tikzpicture}
[baseline=(current bounding box.center),decoration={markings, 
    mark= at position 0.6 with {\arrow{Straight Barb[ length=1mm, width=1mm]}}}
] 
  \tikzstyle{every node}=[circle, draw, fill=black,inner sep=0pt, minimum width=3pt]
  \node (n3) at (9,0.7) [label= above: $x_2$]{};
  \node (n2) at (8.5,0.7) [label= left: $x_1$]{};
  \node (n1) at (9.5,0.7) [label= right: $x_3$]{};
  \node(n0) at (9,0)  []{};
    \draw [postaction={decorate}](n0) -- (n3)  {} ;
    \draw [](n0) -- (n2)  {} ;
    \draw [postaction={decorate}](n0) -- (n1)  {} ;  
\end{tikzpicture}\\
  & +
 \begin{tikzpicture}
[baseline=(current bounding box.center),decoration={markings, 
    mark= at position 0.6 with {\arrow{Straight Barb[ length=1mm, width=1mm]}}}
] 
  \tikzstyle{every node}=[circle, draw, fill=black,inner sep=0pt, minimum width=3pt]
  \node (n3) at (9,0.7) [label= above: $x_2$]{};
  \node (n2) at (8.5,0.7) [label= left: $x_1$]{};
  \node (n1) at (9.5,0.7) [label= right: $x_3$]{};
  \node(n0) at (9,0)  []{};
    \draw [postaction={decorate}](n0) -- (n3)  {} ;
    \draw [postaction={decorate}](n0) -- (n2)  {} ;
    \draw [](n0) -- (n1)  {} ;
\end{tikzpicture}
+
\begin{tikzpicture}
[baseline=(current bounding box.center),decoration={markings, 
    mark= at position 0.6 with {\arrow{Straight Barb[ length=1mm, width=1mm]}}}
] 
  \tikzstyle{every node}=[circle, draw, fill=black,inner sep=0pt, minimum width=3pt]
  \node (n3) at (9,0.7) [label= above: $x_2$]{};
  \node (n2) at (8.5,0.7) [label= left: $x_1$]{};
  \node (n1) at (9.5,0.7) [label= right: $x_3$]{};
  \node(n0) at (9,0)  []{};
    \draw [](n0) -- (n3)  {} ;
    \draw [](n0) -- (n2)  {} ;
    \draw [postaction={decorate}](n0) -- (n1)  {} ;  
\end{tikzpicture} 
+
\begin{tikzpicture}
[baseline=(current bounding box.center),decoration={markings, 
    mark= at position 0.6 with {\arrow{Straight Barb[ length=1mm, width=1mm]}}}
] 
  \tikzstyle{every node}=[circle, draw, fill=black,inner sep=0pt, minimum width=3pt]
  \node (n3) at (9,0.7) [label= above: $x_2$]{};
  \node (n2) at (8.5,0.7) [label= left: $x_1$]{};
  \node (n1) at (9.5,0.7) [label= right: $x_3$]{};
  \node(n0) at (9,0)  []{};
    \draw [](n0) -- (n3)  {} ;
    \draw [postaction={decorate}](n0) -- (n2)  {} ;
    \draw [](n0) -- (n1)  {} ;
\end{tikzpicture} 
+
\begin{tikzpicture}
[baseline=(current bounding box.center),decoration={markings, 
    mark= at position 0.6 with {\arrow{Straight Barb[ length=1mm, width=1mm]}}}
] 
  \tikzstyle{every node}=[circle, draw, fill=black,inner sep=0pt, minimum width=3pt]
  \node (n3) at (9,0.7) [label= above: $x_2$]{};
  \node (n2) at (8.5,0.7) [label= left: $x_1$]{};
  \node (n1) at (9.5,0.7) [label= right: $x_3$]{};
  \node(n0) at (9,0)  []{};
    \draw [postaction={decorate}](n0) -- (n3)  {} ;
    \draw [](n0) -- (n2)  {} ;
    \draw [](n0) -- (n1)  {} ;
\end{tikzpicture} \ \ .
\end{split}
\end{equation}

\section{Path integrals}\label{sec: path int}
The interacting quantum field theory on a causal set described above was first proposed in path integral or ``histories'' form as a decoherence functional by Rafael Sorkin in \cite{Sorkin:2011pn}. The decoherence functional, or double path integral of Schwinger-Keldysh form, is the basis for an alternative foundation for quantum theory in which histories and events, rather than operators and states, are fundamental (see for example \cite{Sorkin:1994dt}). 
In this section we will not dwell on this foundational aspect of the path integral approach but instead briefly show how the interacting decoherence functional proposed by Sorkin encapsulates, mathematically, the time-ordered correlation functions in the in-in formalism calculated in Section \ref{sec: in in}. More details of the translation between the operator formalism and the path-integral formalism are given in \cite{albertini:2021} and expanded on in \cite{Jubb:2023mlv}. 
We use the decoherence functional to write down a generating functional for the in-in correlators of Section \ref{subsec: in in correlators}. We also give the generating functional for in-out correlators.
\subsection{The decoherence functional of the free theory}
Consider a causal set $\calC$ with $N$ elements. The space of histories of a real scalar quantum field theory on $\calC$ is the space of real vectors $\RR^N$. We denote a vector by $\xi$ with components $\xi_x$, $x \in \calC$. The decoherence functional for the free theory in a particular state is given by \cite{Sorkin:2011pn},
 \begin{align}
    D_0(\xi,\bar{\xi}) & = \left\langle \delta(\phi_1-\bar{\xi}_1)\delta(\phi_2-\bar{\xi}_2)...\delta(\phi_N-\bar{\xi}_N)\delta(\phi_N-\xi_N)...\delta(\phi_2-\xi_2)\delta(\phi_1-\xi_1)\right\rangle \,,\label{eq:dcf0}
\end{align} 
where we write $\phi_x$ as a shorthand for the free field operator $\phi(x)$ and $\left\langle \cdot \right\rangle$ denotes expectation value in a Gaussian state, $\rho$, which may be the SJ state or another state:
 \begin{align}
  \langle \hat{O} \rangle = \rho\left(\hat{O}\right) = \textrm{Tr}\left[\hat\rho \hat{O} \right] \,.
\end{align} 

\noindent The decoherence functional should have a label indicating its dependence on the state $\rho$ but we omit it for convenience.\\

\noindent Note that:
\begin{itemize}
\item the decoherence functional is normalised
 \begin{equation}\label{eq:normalisation}
    \int d^N \xi\, d^N \bar{\xi}\,\, D_0(\xi,\bar{\xi}) = 1,
\end{equation} 
\item the ordering of the operators in the expectation value (\ref{eq:dcf0}) uses the label order of $\calC$,  
\item  
since operators at spacelike points commute, we have
 \begin{align}
    D_0(\xi,\bar{\xi}) & = \left\langle \overline{C} [\delta(\phi_1-\bar{\xi}_1)\delta(\phi_2-\bar{\xi}_2)... \delta(\phi_N-\bar{\xi}_N)]\,\,C[\delta(\phi_N-\xi_N)...\delta(\phi_2-\xi_2)\delta(\phi_1-\xi_1)]\right\rangle \,,\label{eq:dcf0}
\end{align} 
where $C$ and $\overline{C}$ are the causal and anti-causal ordering operators defined in Section \ref{sec label order}, so the $\xi$ delta-functions are causally ordered and the $\bar{\xi}$ delta-functions are anti-causally ordered, 
\item the delta-functions ensure that the decoherence functional vanishes unless the field values $\xi_x = \bar{\xi}_x$ on every maximal element $x$ of $\calC$,
\item the decoherence functional (\ref{eq:dcf0}) is evaluated in \cite{Sorkin:2011pn}.
\end{itemize}

Let $F(\xi)$ and $G(\xi)$ be real functions on $\RR^N$. If we integrate $F(\xi)G(\bar{\xi})$ over all \textit{pairs} of field configurations, $(\xi,\bar{\xi})\in \mathbb{R}^N\times \mathbb{R}^N$ against a \textit{measure} which equals the decoherence functional, the delta-functions in  $D_0(\xi,\bar{\xi})$ act to causally/anti-causally order the corresponding functions of the field operators:
\begin{equation}\label{eq: df master equation}
\int_{\mathbb{R}^{2N}}d^N\bar{\xi}d^N\xi \, D_0(\xi,\bar{\xi}) F(\xi) G(\bar{\xi}) \,  = \left\langle \overline{C}\left[ G(\phi) \right] C\left[ F(\phi) \right]\right\rangle.
\end{equation}
The simplest examples are the $2$-point correlators and we have,
\begin{align}
\left\langle C[ \phi_x  \phi_y ] \right\rangle &=\int_{\mathbb{R}^{2N}} d^N\bar{\xi}d^N\xi \,\label{eq:causal 2pt func dcf0} D_0(\xi,\bar{\xi}) \, \xi_x \xi_y,\\
\left\langle \overline{C}[ \phi_x  \phi_y ] \right\rangle &=\int_{\mathbb{R}^{2N}} d^N\bar{\xi}d^N\xi \, D_0(\xi,\bar{\xi}) \, \bar\xi_x \bar\xi_y,\\
W_{xy} = \left\langle \phi_x  \phi_y  \right\rangle &=\int_{\mathbb{R}^{2N}} d^N\bar{\xi}d^N\xi \, D_0(\xi,\bar{\xi}) \, \bar\xi_x \xi_y\,.
\end{align}
Slightly more generally, when the functions are monomials in the field components we have
\begin{equation}
\int_{\mathbb{R}^{2N}}d^N\bar{\xi}d^N\xi \, D_0(\xi,\bar{\xi}) \xi_{x_1}\dots \xi_{x_l}\,\,\bar\xi_{y_1}\dots\bar\xi_{y_m} \,  = \left\langle \overline{C}\left[\phi_{y_1}\dots\phi_{y_m} \right] C\left[ \phi_{x_1}\dots \phi_{x_l} \right] \right\rangle\,.
\end{equation}
The causal ordering results from the fact that in the integral, the factor $\xi_x$, say, is just a real variable and can be moved anywhere. When $\xi_x$ is moved next to the corresponding operator factor 
$\delta(\phi_x - \xi_x)$ in the \dcf, $\xi_x$ becomes the operator $\phi_x$ which, as an operator, is now in the causally ordered position of $\delta(\phi_x - \xi_x)$ in the operator product.

\subsection{The decoherence functional of the interacting theory}
In the interacting theory with a $\phi^4$ interaction, Sorkin proposed the interacting decoherence functional,
 \begin{align}
    D_g(\xi,\bar{\xi}) & = D_0(\xi,\bar\xi) \, e^{- i ( \xi^4 - \ {\bar{\xi}}^4) \cdot {{g}}  },
\end{align} 
where,
\begin{align} 
{g}  = (g_1, g_2, g_3, \dots g_N),
\end{align}
is a vector of $N$ coupling constants, one for each causet element and,
\begin{align} 
\calV_{int}(\xi) = { \xi}^4\cdot{g}  : = \sum_{x = 1}^N (\xi_x)^4 g_x, 
\end{align}
is the self-interaction potential. 

The interacting \dcf\  can be generalised to the case of any real polynomial self-interaction. In other words, the $\phi^4$ interaction potential can be replaced by,
    \begin{align} 
 \sum_{x = 1}^N (\xi_x)^4 g_x \longrightarrow
 \calV_{int}(\xi) = \sum_{x = 1}^N \calP_x(\xi_x)\,,
\end{align}
where each local $\calP_x$ is a real polynomial, that may vary from element to element. 
Then the interacting \dcf\ is, 
 \begin{align}\label{intdcf}
    D_g(\xi,\bar{\xi}) & = D_0(\xi,\bar\xi) \, e^{- i \calV_{int}(\xi) +i \calV_{int}(\bar\xi)   }.
\end{align}

Note that:
\begin{itemize}
\item the interacting decoherence functional is normalised,   \textit{i.e.}, the integral of (\ref{intdcf}) over $\xi$ and $\bar\xi$ equals 1 \cite{albertini:2021,Jubb:2023mlv},
\item the varying local polynomial allows for an interaction that is zero outside some interaction region of the causal set, 
\item another generalisation is to $M$  scalar fields, $\phi^{(1)}, \phi^{(2)}, \dots \phi^{(M)}$, with an interaction potential at each element that is a polynomial in the $M$ fields at that element, 
\item in making the correspondence with the continuum, if the causal set is well approximated by a spacetime region $\mathcal{M}$ of dimension $d$ and spacetime volume $V$, then we have for example in the $\phi^4$ case with constant coupling $g_x=g$ for all $x$,
      \begin{align} 
g\sum_{x = 1}^N \,  \,(\xi_x)^4 \longleftrightarrow
\lambda\int_{\mathcal{M}} d^dx \sqrt{-g}\,  \,\xi(x)^4
\end{align}
and so 
 $g  \longleftrightarrow dV \lambda $
where $dV = V/N = l^{d}$, $l$ is the discreteness scale and $\lambda$ is the coupling constant in the continuum.
\end{itemize}

\noindent The interacting causally ordered $n$-point function is given in path integral form as before, with the free \dcf-measure replaced by the interacting \dcf\ (\ref{intdcf}),
\begin{equation} \label{intncorr}
    \left\langle C\left[ \phi^H_{x_1} \dots \phi^H_{x_n}\right]\right\rangle = \int_{\mathbb{R}^{2N}}d^N\bar{\xi}\,d^N \xi \, D_g(\xi,\bar{\xi}) \,  \xi_{x_1}\dots\xi_{x_n}\,.
\end{equation}
It can be shown \cite{albertini:2021,Jubb:2023mlv} that this equals the interacting in-in causally ordered $n$-point function of the Section \ref{sec: in in} in a given state $\rho_{in}$. That is, the  Heisenberg field operators are given by (\ref{Heisenberg}) where $\calH_x = \calP_x(\phi_x)$ for each $x \in \calC$ and $\phi_x$ are the interaction picture fields. $\rho_{in}$ depends on the Gaussian state $\rho$ that defines the free \dcf\ (\ref{eq:dcf0}) in the following way. Since the algebra of observables is generated by the interaction picture field operators (see Section \ref{sec:algebras}) the in-state $\rho_{in}$ is defined by its value on every product of pairs of interaction picture field operators,  which values are extended to all monomials of interaction picture operators (and thence by linearity to all polynomials) by Wick's theorem. So the state $\rho_{in}$ is fully specified by the Wightman function of $\rho$ in the free theory:
\begin{align} \label{definstate}
\rho_{in}\left(\phi_x\phi_y\right)= W_{xy}\,, 
\end{align}
(plus the one point functions if they are nonzero in $\rho$).

In the case that $\rho$ is a pure state $|\Psi\rangle$, then $\rho_{in}$ is also a pure state $|\Psi_{in}\rangle$. 

\subsection{Generating functionals}
The in-in generating functional is given in terms of the \dcf: 
\begin{equation}
    \mathcal{Z}^{in-in}[J,\bar J]=\int d^N\xi\, d^N\bar\xi  \, D_g(\xi,\bar{\xi})  \, e^{-i J.\xi} \, e^{i \bar J.\bar \xi},
\end{equation}
where $J$ and $\bar J$ are two independent sources. For example, the 
in-in causally ordered $2$-point correlator is given by 
\begin{equation}
 i \frac{\partial}{\partial J_x}i\frac{\partial}{\partial J_y} \mathcal{Z}^{in-in}[J,\bar J]\Big| _{J=0,\bar J=0} 
  =   \left\langle C\left[\phi^H_x\phi^H_y\right] \right\rangle_{\rho_{in}}\,,
\end{equation}
with similar expressions with more derivatives for the causally ordered in-in $n$-point correlators. Derivatives with respect to $\bar{J}$ similarly result in anti-causally ordered products of field operators. 

The generating functional for the causally ordered in-out correlators is not given in terms of the interacting \dcf\ but by a closely related expression,
\begin{equation}
     \mathcal{Z}^{in-out}[J]=\frac{\int d^N\xi d^N\bar\xi  \, D_0(\xi,\bar\xi)\,  e^{-i  \calV_{int}(\xi) } \, e^{-i J \cdot \xi}}{\int d^N\xi \,d^N \bar\xi  \, D_0(\xi,\bar\xi)\,  e^{-i\calV_{int}(\xi)}}\,,
\end{equation}
where the case of the $\phi^4$ interaction is shown (again, this can be generalised to any polynomial local interaction).  
Now we have,
\begin{equation}\label{inout2point}    i\frac{\partial}{\partial J_x}i\frac{\partial}{\partial J_y} \mathcal{Z}^{in-out}[J]\Big| _{J=0} = \frac{\langle \hat{S}\, C[\phi^H_x\phi^H_y  ]\rangle}{\langle \hat{S}\rangle}  \,,
\end{equation}
where the S-operator is given by, \begin{equation}\label{eq: S operator}\hat{S}= C[\prod_{z\in \mathcal{C}} e^{-i \calH(z)}]\end{equation} and the $C$ operator applied to Heisenberg field acts as, \begin{equation} C[\phi^H(x)\phi^H(y)]=
    \begin{cases}
      \phi^H(x)\phi^H(y) & \text{if $x\succ y$}\\
       \phi^H(y)\phi^H(x) & \text{otherwise}.\\
    \end{cases}\end{equation} To see this, note that the derivatives have no effect on the denominator of the generating functional so the denominator of \eqref{inout2point} follows directly from \eqref{eq: df master equation}. The numerator, after taking derivatives and setting $J=0$, becomes 
$\langle C[\prod_{z}e^{-i  \mathcal{H}(z)} \,\phi_x\phi_y]\rangle $. When $x>y$, the causally-ordered product can be expanded as, \begin{equation}
    \langle e^{-iH(N)}\ldots e^{-iH(x)}U_x\phi^H_xU_x^{\dagger}e^{-iH(x-1)}\ldots e^{-iH(y)}U_y\phi^H_yU_y^{\dagger}e^{-iH(y-1)}\ldots e^{-iH(1)}\rangle=\langle \hat{S}\phi_x^H\phi_y^H\rangle,
\end{equation}
where on the LHS we used relation \eqref{Heisenberg U} between the Heisenberg and interaction picture fields and on the RHS we simplified the products of exponentials. Similarly, when $x<y$
we obtain $\langle \hat{S}\phi_y^H\phi_x^H\rangle$, so overall the numerator is given by $\langle \hat{S}C[\phi_x^H\phi_y^H]\rangle$.

In contrast to the in-in correlators, the expansion of the in-out correlators as a power series in the coupling(s) does not terminate at a finite order even if the causal set itself is finite. One can see this by expanding \eqref{inout2point} diagrammatically: the diagrams are identical to those of the continuum, with ${\langle \hat{S}\rangle}$ given by the sum of vacuum bubble diagrams. 
\section{Scattering amplitudes}\label{sec: scattering}
In this Section, we propose a definition for the S-matrix on a causal set. In curved spacetime, particles can be produced by a non-static metric and one expects this also in QFT on a causal set that is a sprinkling into a nonstatic spacetime. On the causal set, there are many continuum structures and concepts that are used in continuum QFT that are missing. For example, there is no analogue of a Cauchy surface in a causal set. We progress by adapting the concept of ``scattering'' to the technology we do have on the causal set: the SJ vacuum with its corresponding particle states, noting that in the continuum, in a spacetime with a timelike Killing vector, one can show formally that the SJ vacuum equals the usual canonical vacuum defined by the positive frequency modes \cite{Afshordi:2012jf}. To get as close as possible to the usual set-up for scattering calculations, we assume that the interaction region is confined to a region between non-interacting ``past'' and ``future'' regions. This enables the definition of ``asymptotic'' states formally associated to these non-interacting past and future regions, which states then provide a proposed definition for the causal set S-matrix.

To define the past and future regions, we will make use of the following terminology. A subcauset $\mathcal{S}\subseteq \mathcal{C}$ is called a \textit{stem} (or \textit{down-set}) if $x\in \mathcal{S}$ implies that $y\in \mathcal{S}$ for all $y\prec x$. A subcauset $\mathcal{S}\subseteq \mathcal{C}$ is called a \textit{total stem} if it is a stem and for all $y\not\in \mathcal{S}$ there exists some $x\in \mathcal{S}$ such that $x\prec y$.  A subcauset $\mathcal{S}\subseteq \mathcal{C}$ is called an \textit{up-set} if $x\in \mathcal{S}$ implies that $y\in \mathcal{S}$ for all $y\succ x$. A subcauset $\mathcal{S}\subseteq \mathcal{C}$ is called a \textit{total up-set} if it is an up-set and for all $y\not\in \mathcal{S}$ there exists some $x\in \mathcal{S}$ such that $x\succ y$.  

Consider a finite causet $\mathcal{C}$ with $N>1$ elements. Let $\mathcal{P}$ and $\mathcal{F}$ denote a total down-set and a total up-set in $\mathcal{C}$, respectively, with $\mathcal{P}\cap\mathcal{F}=\emptyset$. We will refer to $\mathcal{P}$ and $\mathcal{F}$ as the past and future regions in $\mathcal{C}$ and require that there are no interactions in these regions, \textit{i.e.}, the interaction region is given by $\mathcal{C}\setminus \{\mathcal{P}\cup\mathcal{F}\}$. Recall that we require that $\mathcal{C}$ is naturally labelled, \textit{i.e.} that if $x\prec y$ then $x<y$ (cf. Section \ref{subsec: causets}). Therefore, regardless of which natural labelling one chooses, the element labelled 1 is always contained in the past region $\mathcal{P}$ and the element labelled $N$ is always contained in the future region $\mathcal{F}$. This will be the key to defining our ``asymptotic'' states.

To define particle states in the interacting theory, note that we can obtain a mode expansion of the Heisenberg field by substituting the free field mode expansion \eqref{free field mode expansion} into definition \eqref{Heisenberg U} of the Heisenberg field,
\begin{equation}
\phi^H(x)=\sum_{\lambda>0} \sqrt{\lambda}\bigg( v^{\lambda}_x b_{\lambda}(x)+\bar{v}^{\lambda}_x b_{\lambda}^{\dagger}(x)\bigg),\end{equation}
where we define $b_{\lambda}(x)=U_x^\dagger a_{\lambda} U_x$.

At $x=1$, we recover the free theory ladder operators, $b_{\lambda}(1)=a_{\lambda}$, and the vacuum, denoted by $|0;1\rangle$ and defined by the requirement that $b_{\lambda}(1)|0;1\rangle=0$ for all $\lambda$, is the SJ vacuum (cf. Section \ref{sec:freetheory}).

At $x=N$, the vacuum, denoted by $|0;N\rangle$, is given by,
\begin{equation}\begin{split}
     b_{\lambda}(N) |0;N\rangle=U_N^{\dagger} a_{\lambda}&U_N|0;x\rangle=0
        \implies U_N|0;x\rangle \propto |0\rangle 
         \implies|0;N\rangle \propto U_N^{\dagger}|0\rangle, \\
\end{split}
\end{equation}
where $\propto$ denotes equality up to an overall phase.

Finally, we define the number operator at $x=1$ and $x=N$ as, \begin{equation}
    \mathcal{N}(x)=\sum_{\lambda} b^{\dagger}_{\lambda}(x)b_{\lambda}(x).
\end{equation}
We can now define an in-state as an eigenstate of $\mathcal{N}(1)$ and an out-state as an eigenstate of $\mathcal{N}(N)$. One can verify that the in-states are given by applications of $b^{\dagger}_{\lambda}(1)$ on $|0;1\rangle$, while the out-states are given by applications of $b^{\dagger}_{\lambda}(N)$ on $|0;N\rangle$. We choose our normalisation to be such that a 1-particle state is given by $|\lambda;x\rangle~=~\frac{1}{\sqrt{\lambda}}b^{\dagger}_{\lambda}(x)|0;x\rangle$ so that $\langle 0;x|\phi(x)|\lambda;x\rangle~=~v_x^{\lambda}$ for $x=1,N$.

(Note that while it may seem that the notion of the number operator and the associated particle states could be extended to any $1\leq x\leq N$ this is problematic because the ``state at $x$'' which one obtains in this way is generally label-dependent unless $x=1,N$. Replacing $U_x$ by the covariant $V_x$ (cf. Section \ref{sec: interacting fields}) results in label-independent states but does not allow for the notion of asymptotic regions, requiring instead that a state be defined at a point.)

Scattering amplitudes are given by the overlap of an in- and an out-state. Taking 2-to-2 scattering as an example, the associated amplitude is,
\begin{equation}
    \begin{split}\label{eq:scatteringamp}
       _{out}\langle \lambda_3,\lambda_4 |\lambda_1,\lambda_2\rangle_{in}= \langle \lambda_3,\lambda_4 ; N|\lambda_1,\lambda_2; 1\rangle=\langle \lambda_3, \lambda_4|  \hat{S} |\lambda_1,\lambda_2\rangle,
    \end{split}
\end{equation}
where on the RHS, $|\lambda,\lambda'\rangle$ are the particle states of the free theory and $\hat{S}$ is the S-operator given in \eqref{eq: S operator} which reduces to,
\begin{equation}
    \hat{S}=C[\prod_ze^{-i\mathcal{H}(z)}],
\end{equation}
where the product is over points $z$ in the interaction region.

To evaluate S-matrix elements, one follows the familiar continuum prescription: expand $U_x$ order by order in the interaction coupling and apply Wick's theorem. A contraction of a pair of fields gives rise to a Feynman propagator (represented by an internal leg)  while a contraction of a field with a 1-particle state gives a mode function,
\begin{equation}
    \contraction{}{\phi}{(x)}{|\lambda\rangle}
\phi(x)| \lambda\rangle=v_x^{\lambda}|0\rangle,
\end{equation}
and is represented by an external leg. From (\ref{eq:scatteringamp}) we see that in the absence of interactions, the amplitude trivializes and it does not capture any particle production due to a non-static metric. We will investigate this in future work.

\section{Conclusion}\label{sec: conclusion}
\paragraph{Summary} In the causal set approach to quantum gravity, spacetime is fundamentally discrete at the Planck scale and takes the form of a causal set. Establishing a framework for quantum field theory on causal sets is a valuable exercise: for describing matter on a causal set spacetime, for making phenomenological predictions under the assumption of spacetime discreteness and for exploring new avenues for regularising the UV divergences of the continuum through numerical methods. The main result of this work extends the body of work on the in-in formalism on a causal set by proving that the work of \cite{albertini:2021,Jubb:2023mlv} on 2-point functions in $\phi^4$ theory generalises to all causally ordered in-in correlators of local operators in scalar theories. This approach complements the construction in \cite{Dable-Heath:2019sej} of algebraic quantum field theory on a causal set and the associated diagrammatic expansions of \cite{Hawkins:2016ddq}. We leave understanding the relationship between the two approaches to future work.

Our approach adapted the continuum construction of \cite{Dickinson:2013lsa} to a causal set background and recovered the same diagrammatic expansion. However, there are some differences between the continuum and discrete which stem from the difference in the interpretation of the diagrams. The fact that in the discrete the continuum integrals over the interaction vertices are replaced by sums mean that several interaction vertices can live at the same spacetime point. And since the diagrams can be interpreted as decorated sub-causal sets, when the interaction region is finite there are only finitely many of them and the series terminates at a finite order.

\textbf{Future directions}
By sprinkling causal sets at large density $\rho$ into cosmological spacetimes, our framework can be applied to the computation of cosmological observables. Each diagram becomes a random variable and understanding how their behaviour depends on $\rho$ is of interest. In this scenario, the coupling constant should be scaled by $1/\rho$ (on dimensional grounds), \textit{e.g.} $\mathcal{H}\sim \frac{g}{4!\rho}\phi^4$. There are many avenues one can explore here, for example computing the mean of a particular diagram over a sample of sprinklings and the fluctuations around this average. A question of particular interest for comparison with the continuum is whether the mean tends to a (finite) limit when $\rho\rightarrow\infty$.

The similarities between the discrete in-in formalism and the continuum formalism of \cite{Dickinson:2013lsa} make comparisons between the discrete and the continuum possible. For example, one can show that in 1+1-dimensional Minkowski space with an interaction region confined to a finite causal diamond, the contribution from a diagram containing only retarded propagators (and no Feynman propagators) is equal to the $\rho\rightarrow\infty$ limit of the ensemble average of the same diagram in the discrete. The key that makes this calculation possible is that in 1+1 dimensions the retarded propagator is constant. Investigating whether this correspondence persists for diagrams which contain Feynman propagators/in higher dimensions is another direction for future research.

\acknowledgments 
The authors thank Calvin Chen, Eli Hawkins, Ian Jubb, Kasia Rejzner, Rafael Sorkin, Andrew Tolley and Yasaman Yazdi for discussions on this work. EA is supported by the STFC Consolidated Grant ST/W507519/1. FD acknowledges the support of the Leverhulme/Royal Society interdisciplinary APEX grant APX/R1/180098. FD is supported in part by STFC grants ST/P000762/1 and ST/W006537/1. Research at Perimeter Institute is supported by the Government of
Canada through Industry Canada and by the Province of Ontario through the Ministry
of Economic Development and Innovation. AN is funded by the President’s PhD Scholarship from Imperial College London and the Canada First Research Excellence Fund through the Arthur B. McDonald Canadian Astroparticle Physics Research Institute. AN also appreciates the hospitality received as a visitor at the Perimeter Institute for Theoretical Physics during the completion of this work. SZ is supported by STFC grant ST/W006537/1 and STFC Consolidated Grant ST/X000575/1.

\appendix
\section{Derivation of the diagrammatic expansions}\label{in-in derivation appendix}
In the following, we denote an ordered list by bold letters, \textit{e.g.} $\textbf{a}=a_1\ldots a_n$ is an ordered list of length $n$. We only consider lists whose entries take value in the natural numbers.

\begin{definition}[The function $\Upsilon$]\label{Upsilon definition appendix} Given an ordered list $\textbf{a}=a_1a_2\ldots a_n$, we define,
\begin{equation}
  \Upsilon(\textbf{a})= \begin{cases}
      \frac{1}{l_1!\ldots l_k!} & a_1\geq a_2 \geq \ldots \geq z_n\\
      0 & \text{otherwise,}\\
    \end{cases}    
\end{equation}
where $k$ denotes the number of strings of equal signs in $a_1\geq a_2\geq \ldots\geq  a_n$ and $l_i$ is the length of the $i^{th}$ string of equal signs.
\end{definition}
For example, if $\textbf{a}=1,2,2,4,5,5,5$ and $\textbf{b}=1,4,2,4$ then $\Upsilon(\textbf{a})=\frac{1}{12}$ and $\Upsilon(\textbf{b})=0$.
\begin{definition}[Shuffle of ordered lists]
    Given a pair of ordered lists, $\textbf{a}$ and $\textbf{b}$, their shuffle, denoted by $\textbf{a}\shuffle\textbf{b}$ is the set of ordered lists which can be constructed by putting together all of the entries of $\textbf{a}$ and $\textbf{b}$ into a single list $\textbf{c}$ such that the order of elements in $\textbf{c}$  is compatible with the order of elements in $\textbf{a}$ and $\textbf{b}$.
\end{definition}
For example, if $\textbf{a}=a_1 a_2$ and $\textbf{b}=b_1$ then $\textbf{a}\shuffle\textbf{b}=\{ a_1a_2b_1, a_1b_1a_2, b_1a_1a_2\}$.

\begin{definition}[Permutation of an ordered list]
   Given an ordered list $\textbf{a}=a_1a_2\ldots a_n$, a permutation $\pi$ of $\textbf{a}$ is an ordered list of length $n$ such that there exists a bijection $f:\textbf{a}\rightarrow\pi$ with $f(a_i)=a_i$ for all $i$.    
    \end{definition}
We will make use of the following properties of the $\Upsilon$ function. 
For any list $\textbf{a}$,
\begin{equation}
    \sum_{\pi}\Upsilon(\pi)=1,
\end{equation}
where the sum is over all permutations $\pi$ of $\textbf{a}$. For any pair of lists, $\textbf{a}$ and $\textbf{b}$,
\begin{equation}\label{eq upsilon identity}
\Upsilon(\textbf{a})\Upsilon(\textbf{b})=\sum_{\textbf{c}\in \textbf{a}\shuffle \textbf{b}}\Upsilon(\textbf{c}),
\end{equation}
where $\shuffle$ denotes the shuffle product.

\begin{lemma}\label{lemma Uxy form}
    \begin{equation}
        U_{x,y}=\sum_{n=0}^{\infty} (-i)^n \sum_{z_1,\ldots,z_n=y}^{x-1}\mathcal{H}(z_1)\ldots \mathcal{H}(z_n) \Upsilon (\textbf{z}^{(n)})
    \end{equation}
    where $\textbf{z}^{(n)}=z_1z_2\ldots z_n$ and the $n=0$ terms is understood to equal 1.
\end{lemma}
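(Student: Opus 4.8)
The plan is to expand $U_{x,y}$ directly and match the combinatorics against the definition of $\Upsilon$. First I would invoke the observation from Section \ref{sec: interacting fields} that under the causal-ordering operator $C$ all field commutators vanish, so the operators $\mathcal{H}(z)$ for $y\le z<x$ may be treated as mutually commuting inside $C$. This permits replacing the causally-ordered product of exponentials by the exponential of the sum and expanding the power series,
\begin{equation}
U_{x,y}=C\Big[e^{-i\sum_{y\le z<x}\mathcal{H}(z)}\Big]=\sum_{n=0}^{\infty}\frac{(-i)^n}{n!}\sum_{z_1,\ldots,z_n=y}^{x-1}C\big[\mathcal{H}(z_1)\cdots\mathcal{H}(z_n)\big].
\end{equation}
It then remains to identify $\tfrac{1}{n!}\sum_{\mathbf z}C[\mathcal{H}(z_1)\cdots\mathcal{H}(z_n)]$ with $\sum_{\mathbf z}\mathcal{H}(z_1)\cdots\mathcal{H}(z_n)\,\Upsilon(\mathbf{z}^{(n)})$ term by term in $n$.

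The key step is the operator identity
\begin{equation}
C\big[\mathcal{H}(z_1)\cdots\mathcal{H}(z_n)\big]=\sum_{\pi}\mathcal{H}(z_{\pi(1)})\cdots\mathcal{H}(z_{\pi(n)})\,\Upsilon\big(z_{\pi(1)},\ldots,z_{\pi(n)}\big),
\end{equation}
where $\pi$ runs over all $n!$ permutations of the positions. Since causal ordering sorts the factors into decreasing-label order (cf. Section \ref{sec label order}) and $\Upsilon$ is supported precisely on weakly decreasing tuples, only the sorted permutations survive on the right. When the $z_i$ are distinct this is immediate; the substantive point is the repeated-label case, where a label $v$ occurring $m_v$ times admits $m_v!$ permutations all producing the same sorted monomial, and the factor $1/\prod_v m_v!$ built into $\Upsilon$ exactly compensates this overcounting so that both sides collapse to the single causally-ordered monomial. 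This is the step I expect to be the main obstacle, and it is exactly where the factorial denominators in Definition \ref{Upsilon definition appendix} are indispensable.

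Finally I would substitute this identity and relabel dummy indices. For each fixed $\pi$, as $(z_1,\ldots,z_n)$ ranges over $[y,x-1]^n$ the reindexed tuple $(z_{\pi(1)},\ldots,z_{\pi(n)})$ ranges over the same set bijectively, so $\sum_{\mathbf z}\mathcal{H}(z_{\pi(1)})\cdots\mathcal{H}(z_{\pi(n)})\,\Upsilon(z_{\pi(1)},\ldots,z_{\pi(n)})$ is independent of $\pi$ and equals $\sum_{\mathbf w}\mathcal{H}(w_1)\cdots\mathcal{H}(w_n)\,\Upsilon(\mathbf w)$. Summing over the $n!$ permutations then cancels the prefactor $1/n!$ and reproduces the claimed expansion.

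As an alternative I would mention induction on $x-y$ using $U_{x,y}=U_{x,y+1}\,e^{-i\mathcal{H}(y)}$, which holds because $y$ is the minimal label in the range (natural labelling forbids any $z\in[y+1,x-1]$ from lying in the past of $y$), so its exponential can be pushed to the right past the causally-ordered remainder. Expanding $e^{-i\mathcal{H}(y)}=\sum_j \tfrac{(-i)^j}{j!}\mathcal{H}(y)^j$ and noting that in any decreasing tuple all copies of the minimal label sit at the tail and contribute a factor $1/j!$ to $\Upsilon$ reproduces the inductive step; the shuffle identity \eqref{eq upsilon identity} plays the analogous role for the general composition $U_{x,w}U_{w,y}=U_{x,y}$.
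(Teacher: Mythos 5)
Your proof is correct, but it is organized differently from the paper's. The paper never passes to the exponential of the sum: it expands each factor of the defining product $C[\prod_{y\le z<x}e^{-i\mathcal{H}(z)}]$ directly, using the fact that for a naturally labelled causet, writing the factors in decreasing label order already implements the causal ordering. Every order-$n$ monomial in that expansion has the form $(-i)^n\,\mathcal{H}(z_1)^{n_1}\cdots\mathcal{H}(z_m)^{n_m}/(n_1!\cdots n_m!)$ with $x-1\ge z_1>\cdots>z_m\ge y$ and $n_1+\cdots+n_m=n$, and the lemma follows by recognizing that the $\Upsilon$-weighted sum over all weakly decreasing $n$-tuples enumerates exactly these terms with exactly these factorials. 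Your route instead converts to $C[e^{-i\sum_z\mathcal{H}(z)}]$, which introduces the $1/n!$ and the unrestricted sum over tuples, and then cancels the $n!$ via your symmetrization identity $C[\mathcal{H}(z_1)\cdots\mathcal{H}(z_n)]=\sum_{\pi}\mathcal{H}(z_{\pi(1)})\cdots\mathcal{H}(z_{\pi(n)})\,\Upsilon(z_{\pi(1)},\ldots,z_{\pi(n)})$ together with dummy relabelling; your handling of repeated labels ($\prod_v m_v!$ surviving permutations against the weight $1/\prod_v m_v!$ in $\Upsilon$) is exactly right, and it is the same combinatorial fact that powers the paper's count. What each approach buys: the paper's expansion is more economical, since it never needs the product-to-sum conversion under $C$ --- a step you borrow from the remark in Section \ref{sec: interacting fields}, which, to be fully self-contained, would itself require a multinomial-coefficient argument that is in essence the paper's proof; your version, in exchange, isolates a clean and reusable operator identity that makes transparent why $\Upsilon$ carries precisely those factorial weights. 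Your inductive alternative via $U_{x,y}=U_{x,y+1}\,e^{-i\mathcal{H}(y)}$ is also sound, since natural labelling guarantees that no element of $[y+1,x-1]$ lies in the past of $y$.
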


\begin{proof}
Starting with definition \eqref{def Uxy} of $U_{x,y}$, expand the exponentials to obtain, 
\begin{equation}\label{Uxy expanded}
\begin{split}
U_{x,y}&= \bigg( 1-i  \mathcal{H}(x-1)+ (-i)^2\frac{\mathcal{H}(x-1)^2}{2}\cdots \bigg)
    \cdots \bigg( 1-i \mathcal{H}(y)+ (-i)^2\frac{\mathcal{H}(y)^2}{2}+\cdots \bigg).
    \end{split}
\end{equation}
Note that every order $n$ term which one can obtain from expanding the RHS of \eqref{Uxy expanded} has the form,
\begin{equation}\label{generic order n term}
\begin{split}
&(-i)^n\frac{\mathcal{H}(z_1)^{n_1}\cdots \mathcal{H}(z_m)^{n_m}}{n_1!\ldots n_m!},\\
    \end{split}
\end{equation}
with $x-1\geq z_1> z_2 \ldots > z_m\geq y$ and $n_1+\cdots+n_m=n$ for some integer $1\leq m\leq n$. The order $n$ term in $U_{x,y}$ is the sum of all terms of the form \eqref{generic order n term}, \textit{i.e.}, the sum of \eqref{generic order n term} over the ordered partitions of $n$ and over the possible choices of $z_1, z_2 \ldots, z_m$:
\begin{equation}
\begin{split}
&\sum_{m=1}^n \underbrace{\sum_{n_{1}>0}\cdots \sum_{n_{m}>0}}_{n_1+\cdots +n_m=n} \ \sum_{x-1\geq z_1>\ldots >z_n\geq y} (-i)^n\frac{\mathcal{H}(z_1)^{n_1}\cdots \mathcal{H}(z_m)^{n_m}}{n_1!\ldots n_m!}
\\
=&(-i)^n \sum_{z_1,\ldots,z_n=y}^{x-1}H(z_1)\ldots \mathcal{H}(z_n)\Upsilon (\textbf{z}^{(n)}),
    \end{split}
\end{equation}
where in the second line $\Upsilon (\textbf{z}^{(n)})$ enforces the non-strict label ordering $z_1\geq z_2 \ldots \geq z_n$ and provides the factorial factors in the denominator.
\end{proof}

\begin{lemma}\label{lemma BCH appendix} For any operator $\mathcal{O}$ and $x>y\in \mathcal{C}$,
 \begin{equation}\label{BCH appendix equation}\begin{split}
&U_{x,y}^{\dagger}\mathcal{O}U_{x,y} = \sum_{n=0}^{\infty} (-i)^n\sum_{z_1,\ldots, z_n=y}^{x-1} \bigg[\ldots\bigg[\bigg[\mathcal{O},\mathcal{H}(z_1)\bigg],\mathcal{H}(z_2)\bigg]\ldots,\mathcal{H}(z_n)\bigg]\Upsilon (\textbf{z}^{(n)}) ,\\
\end{split}
\end{equation}
where $\textbf{z}^{(n)}=z_1z_2\ldots z_n$ 
 and the $n=0$ term is understood to be $\mathcal{O}$. 
\end{lemma}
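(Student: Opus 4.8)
The plan is to prove the identity by induction on the number of elements $x-y$ in the evolution operator, peeling off the exponential at the largest label $x-1$ at each step so that the problem reduces to a single-point conjugation. The starting point is the composition law \eqref{UU composition}, which gives $U_{x,y}=U_{x,x-1}U_{x-1,y}$ with $U_{x,x-1}=C[e^{-i\mathcal{H}(x-1)}]=e^{-i\mathcal{H}(x-1)}$, the causal ordering of a single-point exponential being trivial. Hence
\begin{equation}
U_{x,y}^{\dagger}\mathcal{O}U_{x,y}=U_{x-1,y}^{\dagger}\Big(e^{i\mathcal{H}(x-1)}\,\mathcal{O}\,e^{-i\mathcal{H}(x-1)}\Big)U_{x-1,y},
\end{equation}
which isolates an inner conjugation by a single exponential and an outer conjugation by $U_{x-1,y}$, to which the inductive hypothesis will apply.

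First I would treat the inner conjugation with the standard Hadamard expansion, which for the Hermitian $\mathcal{H}(x-1)$ reads
\begin{equation}
e^{i\mathcal{H}(x-1)}\,\mathcal{O}\,e^{-i\mathcal{H}(x-1)}=\sum_{k=0}^{\infty}\frac{(-i)^{k}}{k!}\,\big[\ldots[\mathcal{O},\mathcal{H}(x-1)],\ldots,\mathcal{H}(x-1)\big],
\end{equation}
with $k$ nested commutators, as follows from $\mathrm{ad}_{i\mathcal{H}}=-i\,[\,\cdot\,,\mathcal{H}]$. The key observation is that the weight $1/k!$ is precisely $\Upsilon(\underbrace{x-1,\ldots,x-1}_{k})$, so this series is already the $z_1=\cdots=z_k=x-1$ sector of the target formula. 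The same remark settles the base case $x=y+1$: there $U_{y+1,y}=e^{-i\mathcal{H}(y)}$, the sum $\sum_{z_1,\ldots,z_n=y}^{y}$ collapses all indices to $y$ with $\Upsilon(y,\ldots,y)=1/n!$, and the right-hand side is exactly the Hadamard expansion of $e^{i\mathcal{H}(y)}\mathcal{O}e^{-i\mathcal{H}(y)}$.

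For the inductive step I would apply the hypothesis, which holds for arbitrary operators and hence for the $k$-fold commutator $C_k=[\ldots[\mathcal{O},\mathcal{H}(x-1)],\ldots,\mathcal{H}(x-1)]$, to $U_{x-1,y}^{\dagger}C_kU_{x-1,y}$. This produces an outer sum over $w_1,\ldots,w_m\in[y,x-2]$ weighted by $\Upsilon(\textbf{w}^{(m)})$, and since $C_k$ is itself a nested commutator of $\mathcal{O}$ with $k$ copies of $\mathcal{H}(x-1)$, the combined object is the single nested commutator with $z_1=\cdots=z_k=x-1$ and $(z_{k+1},\ldots,z_{n})=(w_1,\ldots,w_m)$, where $n=k+m$.

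The hard part will be recombining the resulting double sum over $(k,m)$ into the single sum over $n$ in the statement, and this is where the structure of $\Upsilon$ does the work. Since $x-1$ strictly exceeds every $w_i\le x-2$, the string $x-1,\ldots,x-1,w_1,\ldots,w_m$ is non-increasing exactly when $\textbf{w}$ is, and its leading run of $k$ equal entries forms a single factorial string, giving
\begin{equation}
\Upsilon\big(\underbrace{x-1,\ldots,x-1}_{k},w_1,\ldots,w_m\big)=\frac{1}{k!}\,\Upsilon(\textbf{w}^{(m)}).
\end{equation}
Combined with $(-i)^{k}(-i)^{m}=(-i)^{n}$, this collapses the $(k,m)$-sum into $\sum_{n}(-i)^{n}\sum_{z_1,\ldots,z_n=y}^{x-1}[\ldots]\,\Upsilon(\textbf{z}^{(n)})$, completing the induction. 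Because for a finite causet every commutator string longer than the number of fields in $\mathcal{O}$ eventually vanishes, all the sums are finite and no reordering issues arise. As an alternative one could instead expand both $U_{x,y}$ and $U_{x,y}^{\dagger}$ via Lemma \ref{lemma Uxy form} and resum using the shuffle identity \eqref{eq upsilon identity}, but the peeling argument keeps the nested-commutator structure manifest and is the cleaner route.
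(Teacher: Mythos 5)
Your proof is correct, but it takes a genuinely different route from the paper's. The paper does not induct: it expands both $U_{x,y}^{\dagger}$ and $U_{x,y}$ via Lemma \ref{lemma Uxy form}, writes the order-$n$ contribution as a sum over splittings $n_1+n_2=n$ of signed products $\mathcal{H}(z_{n_1})\cdots\mathcal{H}(z_1)\,\mathcal{O}\,\mathcal{H}(z_{n_1+1})\cdots\mathcal{H}(z_n)$, converts the product $\Upsilon(\textbf{z}^{(n_1)})\Upsilon(\textbf{z}^{(n_2)})$ into a sum over shuffles via \eqref{eq upsilon identity}, relabels, and then closes with a counting argument showing that the nested commutator expands into exactly those $2^n$ signed, label-ordered ``sandwich'' terms. (The alternative you mention in your last sentence is, in fact, precisely the paper's proof.) Your induction on $x-y$ replaces both the shuffle identity and the counting argument with two elementary ingredients: the Hadamard expansion of a single-exponential conjugation, whose $1/k!$ weights correctly reproduce $\Upsilon(x-1,\ldots,x-1)$ and generate the $z_1=\cdots=z_k=x-1$ sector, and the factorization $\Upsilon(x-1,\ldots,x-1,w_1,\ldots,w_m)=\tfrac{1}{k!}\Upsilon(\textbf{w}^{(m)})$, which holds because $x-1$ strictly dominates every $w_i\le x-2$, so the leading run is a maximal string of equalities, and any tuple in which $x-1$ appears after a smaller entry has vanishing $\Upsilon$ and so contributes nothing to either side of your recombination. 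Your route is more elementary and keeps the nested-commutator structure manifest throughout; the paper's route is heavier, but its machinery (Lemma \ref{lemma Uxy form} and the shuffle identity) is reused in the subsequent appendix lemmas, e.g.\ in the proof of \eqref{eq lemma conjugation}, so it is not wasted effort there.

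One side remark in your proposal is wrong, though it is inessential to the argument. You justify the absence of reordering issues by claiming that ``every commutator string longer than the number of fields in $\mathcal{O}$ eventually vanishes.'' That is false: each new $\mathcal{H}(z_i)$ may contract against fields from earlier Hamiltonians rather than against $\mathcal{O}$ itself. For instance, with $\mathcal{O}=\phi(x)$ the order-two nested commutator is nonzero in the paper's own expansion \eqref{phi3 example}; and since the lemma is stated for arbitrary operators $\mathcal{O}$ (not just polynomials in the fields), the series need not terminate at all. The correct justification is simply that the identity is an order-by-order statement: at each fixed $n$ all sums are finite, and your regrouping of the double sum over $(k,m)$ into the single sum over $n=k+m$ is diagonal summation, legitimate at the formal-power-series level at which the paper itself operates. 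With that remark replaced, your proof stands.
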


\begin{proof}
By lemma \ref{lemma Uxy form} we have,
\begin{equation}\label{eq 1 lemma A2}\begin{split}
&U_{x,y}^{\dagger}\mathcal{O}U_{x,y}=\bigg(\sum_{n_1=0}^{\infty}((-i)^{n_1})^*\hspace{-5mm}\sum_{z_1,\ldots ,z_{n_1}=y}^{x-1}\hspace{-4mm}\mathcal{H}(z_{n_1})\ldots \mathcal{H}(z_1) \Upsilon(\textbf{z}^{(n_1)})\bigg)\hspace{2mm}\mathcal{O}\\&\hspace{60mm}\bigg(\sum_{n_2=0}^{\infty}(-i)^{n_2}\hspace{-5mm}\sum_{z_{n_1+1},\ldots, z_{n_1+n_2}=y}^{x-1}\hspace{-4mm}\mathcal{H}(z_{n_1+1})\ldots \mathcal{H}(z_{n_1+n_2}) \Upsilon(\textbf{z}^{(n_2)})\bigg),
\end{split}
\end{equation}
with $\textbf{z}^{(n_1)}=z_1z_2\ldots z_{n_1}$, $\textbf{z}^{(n_2)}=z_{n_1+1}z_{n_1+2}\ldots z_{n_1+n_2}$ and the $n_1=0$ and $n_2=0$ terms understood to be equal to 1. The order $n>0$ contribution to \eqref{eq 1 lemma A2} is,
\begin{equation}\label{eqtn order n BCH}\begin{split}
     &(-i)^{n}\sum_{z_1,\ldots,z_n=y}^{x-1}\sum_{n_1+n_2=n} (-1)^{n_1} \mathcal{H}(z_{n_1})\ldots \mathcal{H}(z_1) \mathcal{O}   \mathcal{H}(z_{n_1+1})\ldots \mathcal{H}(z_{n})\Upsilon(\textbf{z}^{(n_1)})\Upsilon(\textbf{z}^{(n_2)})\\
     =&(-i)^{n}\sum_{z_1,\ldots,z_n=y}^{x-1}\sum_{n_1+n_2=n} (-1)^{n_1} \mathcal{H}(z_{n_1})\ldots \mathcal{H}(z_1) \mathcal{O}   \mathcal{H}(z_{n_1+1})\ldots \mathcal{H}(z_{n})\sum_{\textbf{w}}\Upsilon(\textbf{w}^{(n)}),
\end{split}
\end{equation}
where in the second line we applied equation \eqref{eq upsilon identity} to write the produce of the $\Upsilon$ as a sum over $\textbf{w}^{(n)}\in\textbf{z}^{(n_1)}\shuffle\textbf{z}^{(n_2)}$. Next, to each $\textbf{w}^{(n)}$ we apply a coordinate transformation so that $\textbf{w}^{(n)}=w_1w_2\ldots w_n\xmapsto{\textbf{w}^{(n)}} \textbf{z}^{(n)}=z_1z_2\ldots z_n$, \textit{i.e.} $w_i\xmapsto{\textbf{w}^{(n)}} z_i$. Writing $p(i)$ to denote the position of $z_i$ in $\textbf{w}^{(n)}$, the transformation we need is $z_i=w_{p(i)}\xmapsto{\textbf{w}^{(n)}} z_{p(i)}$. Plugging this back into \eqref{eqtn order n BCH} we find,
\begin{equation}\label{penultimate step}
\hspace{-5mm} (-i)^{n} \sum_{z_1,\ldots,z_n=y}^{x-1}\Upsilon(\textbf{z}^{(n)})\sum_{n_1+n_2=n}(-1)^{n_1}\sum_{\textbf{w}}
\mathcal{H}(z_{p(n_1)})\ldots \mathcal{H}(z_{p(1)}) \mathcal{O}   \mathcal{H}(z_{p(n_1+1)})\ldots \mathcal{H}(z_{p(n)}),
\end{equation}
where the $p(i)$ labels are implicitly dependent on $\textbf{w}^{(n)}$. By comparing \eqref{penultimate step} with the order $n$ term in \eqref{BCH appendix equation} we find that to complete the proof we must show that,
\begin{equation}\label{counting argument eqtn}\begin{split}
& \bigg[\ldots\bigg[\bigg[\mathcal{O},\mathcal{H}(z_1)\bigg],\mathcal{H}(z_2)\bigg]\ldots,\mathcal{H}(z_n)\bigg]\\
 &=\sum_{n_1+n_2=n}(-1)^{n_1}\sum_{\textbf{w}}
\mathcal{H}(z_{p(n_1)})\ldots \mathcal{H}(z_{p(1)}) \mathcal{O}   \mathcal{H}(z_{p(n_1+1)})\ldots \mathcal{H}(z_{p(n)}).
 \end{split}
\end{equation}
To prove \eqref{counting argument eqtn}, we proceed with a counting argument. Consider the LHS of \eqref{counting argument eqtn} and note that,
\begin{enumerate}
    \item it contains $2^n$ terms;
    \item each term is of the form of $\mathcal{O}$ sandwiched between a product of Hamiltonians;
    \item each of $\mathcal{H}(z_1),\dots \mathcal{H}(z_n)$ appears in each term exactly once;
    \item a term has a positive sign if there is an even number of Hamiltonians to the left of $\mathcal{O}$ and negative otherwise;
    \item the product of Hamiltonians to the right (left) of $\mathcal{O}$ is ordered so that the labels of the $z$ coordinates increase (decrease) from left to right.
\end{enumerate}
We observe that conditions (1-5) are also satisfied by the RHS. In particular, condition (1) follows from the fact that there are $\binom{n}{n_1}$ shuffles of two words of lengths $n_1$ and $n-n_1$ so the number of terms on the RHS is $\sum_{n_1}\binom{n}{n_1}=2^n$, and condition (5) follows from the order-preserving properties of the shuffle. To complete the proof, we argue that there exists exactly $2^n$ terms which satisfy conditions (2-5) and the result follows.

To construct a term satisfying conditions (2-5), fix some $n_1$ and choose $n_1$ Hamiltonians to be on the left of $\mathcal{O}$ leaving the rest to the right. Now order them to satisfy condition (5), and note that there is a unique way to do that. Thus, there are $\binom{n}{n_1}$ ways to construct an appropriate term with $n_1$ Hamiltonians on the left. The total number of possible terms which can be constructed to satisfy conditions (2-5) is therefore given by $\sum_{n_1}\binom{n}{n_1}=2^n$.
\end{proof}

For the remaining lemmas, consider a finite causal set $\mathcal{C}$ and $k$ integers satisfying $|\mathcal{C}|\geq x_1>x_2\ldots >x_k\geq 1$. Let $\mathcal{O}^H(x_i)$ denote a local Heisenberg picture operator living at $x_i$.
 
\begin{lemma}\label{recursion lemma lemma}
    \begin{equation}\label{recursion lemma}
        \mathcal{O}^H(x_1)\ldots \mathcal{O}^H(x_k)=U_{x_k}^{\dagger}\mathcal{O}_{1\ldots k} U_{x_k},
    \end{equation}
    where $\mathcal{O}_{1\ldots k}$ is defined recursively via,
    \begin{equation}\label{def O1...p}
  \mathcal{O}_{1\ldots p} =
    \begin{cases}
      \mathcal{O}(x_{1}) & p=1\\
      U_{x_{p-1},x_p}^{\dagger}\mathcal{O}_{1\ldots p-1}U_{x_{p-1},x_p}\mathcal{O}(x_{p}) & 1<p\leq k.\\
    \end{cases}       
\end{equation}
\end{lemma}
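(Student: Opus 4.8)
The plan is to prove the identity by induction on the number of operators $k$, using the composition and unitarity properties \eqref{UU composition}--\eqref{UUdagger composition} of the evolution operators to repeatedly absorb the growing product into a single sandwich of the form $U_{x_k}^\dagger(\cdots)U_{x_k}$.

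The base case $k=1$ is immediate: the right-hand side of \eqref{recursion lemma} reads $U_{x_1}^\dagger \mathcal{O}_{1} U_{x_1} = U_{x_1}^\dagger \mathcal{O}(x_1) U_{x_1}$, which is exactly the definition \eqref{Heisenberg U} of $\mathcal{O}^H(x_1)$. For the inductive step I assume the claim for $k-1$ operators, so that $\mathcal{O}^H(x_1)\ldots\mathcal{O}^H(x_{k-1}) = U_{x_{k-1}}^\dagger \mathcal{O}_{1\ldots k-1} U_{x_{k-1}}$, and multiply on the right by $\mathcal{O}^H(x_k)=U_{x_k}^\dagger\mathcal{O}(x_k)U_{x_k}$. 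The two inner evolution operators then meet as $U_{x_{k-1}}U_{x_k}^\dagger$. Since the labelling is natural and $x_{k-1}>x_k$, relation \eqref{UUdagger composition} (read with the convention $U_{x}=U_{x,1}$ and $U_1=1$) collapses this to $U_{x_{k-1}}U_{x_k}^\dagger = U_{x_{k-1},x_k}$; equivalently, composition \eqref{UU composition} factorises $U_{x_{k-1}}=U_{x_{k-1},x_k}U_{x_k}$.

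Substituting this factorisation and using unitarity $U_{x_k}U_{x_k}^\dagger=1$, the outermost factors become $U_{x_k}^\dagger$ on the far left and $U_{x_k}$ on the far right, which leaves
\begin{equation}
\mathcal{O}^H(x_1)\ldots\mathcal{O}^H(x_k)=U_{x_k}^\dagger\Big(U_{x_{k-1},x_k}^\dagger\,\mathcal{O}_{1\ldots k-1}\,U_{x_{k-1},x_k}\,\mathcal{O}(x_k)\Big)U_{x_k}.
\end{equation}
The bracketed operator is precisely $\mathcal{O}_{1\ldots k}$ as given by the $1<p\le k$ branch of the recursive definition \eqref{def O1...p}, which closes the induction.

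I do not anticipate a serious obstacle: the argument is essentially bookkeeping with evolution-operator identities that are already in hand. The one point that demands care is the boundary case $x_k=1$, where $U_{x_k}=U_1=1$ and \eqref{UUdagger composition} must be read with $U_{x_{k-1},x_k}=U_{x_{k-1},1}=U_{x_{k-1}}$; one should check that the rearrangement degenerates consistently there. It is also worth noting explicitly that $U_{x_k}U_{x_k}^\dagger=1$ may be used alongside the stated $U_x^\dagger U_x=1$, which is legitimate because each $U_x$ is a finite product of unitaries $e^{-i\mathcal{H}(z)}$ and is therefore itself unitary.
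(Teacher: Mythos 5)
Your proof is correct and follows essentially the same route as the paper's own (very terse) argument: induction on $k$, closing the inductive step with the composition relations \eqref{UU composition}--\eqref{UUdagger composition} and unitarity of the evolution operators. Your write-up simply makes explicit the bookkeeping, including the two-sided unitarity $U_{x}U_{x}^{\dagger}=1$ and the boundary case $x_k=1$, which the paper leaves implicit.
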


\begin{proof}
This is a short proof by induction. Consider $k=1,2$ for the base case. Then assuming \eqref{recursion lemma} holds for all $k=1,\ldots ,s$ it follows from the composition properties of $U$ (\ref{UU composition}-\ref{UUdagger composition}) that it also holds for $k=s+1$.
\end{proof}

\begin{lemma}
\begin{equation}\label{eq lemma conjugation}
\begin{split}
&U_{x_{p-1},x_p}^{\dagger}\mathcal{O}_{1\ldots p-1}U_{x_{p-1},x_p}\\
&=\sum_{n=0}^{\infty} (-i)^n\sum_{n_1+\cdots +n_{p-1}=n} \ \ \sum_{z_1,\ldots, z_{n_1}=x_2}^{x_1-1}\ \ \sum_{z_{n_1+1},\ldots ,z_{n_1+n_2}=x_3}^{x_2-1}\cdots\sum_{z_{n_1+\cdots n_{p-2}+1},\ldots ,z_{n}=x_p}^{x_{p-1}-1}\\
&\bigg[\hspace{-1mm} \cdots\hspace{-1mm} \bigg[\hspace{-1mm} \cdots\hspace{-1mm} \bigg[\bigg[\hspace{-1mm} \cdots\hspace{-1mm}  \bigg[\mathcal{O}(x_1),\mathcal{H}(z_1)\bigg]\hspace{-1mm} \cdots\hspace{-0.8mm} ,\mathcal{H}(z_{n_1})\bigg]\mathcal{O}(x_2),\mathcal{H}(z_{n_1+1})\bigg]\hspace{-1mm} \cdots\hspace{-0.8mm}  \mathcal{O}(x_{p-1}),\mathcal{H}(z_{n_1+\cdots+ n_{p-2}+1})\bigg]\hspace{-1mm} \cdots\hspace{-0.8mm}  ,\mathcal{H}(z_{n})\bigg]\Upsilon(\textbf{z}^{(n)})\\
\end{split}
\end{equation}
where $\mathcal{O}_{1\ldots p-1}$ is as given in \eqref{def O1...p}.
\end{lemma}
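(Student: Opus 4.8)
The plan is to prove \eqref{eq lemma conjugation} by induction on $p$, using Lemma \ref{lemma BCH appendix} as the engine at each step and the recursion \eqref{def O1...p} to peel off one conjugation at a time. For the base case $p=2$ one has $\mathcal{O}_{1}=\mathcal{O}(x_1)$, so the left-hand side is $U_{x_1,x_2}^{\dagger}\mathcal{O}(x_1)U_{x_1,x_2}$, and Lemma \ref{lemma BCH appendix} with $x\to x_1$, $y\to x_2$ gives exactly the claimed expression: a single batch of nested commutators with the $z$'s summed over $[x_2,x_1-1]$ and a trivial partition sum.

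For the inductive step I would assume the statement with $p$ replaced by $p-1$, so that
\[
M := U_{x_{p-2},x_{p-1}}^{\dagger}\mathcal{O}_{1\ldots p-2}U_{x_{p-2},x_{p-1}}
\]
equals the right-hand side of \eqref{eq lemma conjugation} with $p-1$ in place of $p$ (that is, $p-2$ batches, the $j$-th ranging over $[x_{j+1},x_j-1]$). By the recursion \eqref{def O1...p}, $\mathcal{O}_{1\ldots p-1}=M\,\mathcal{O}(x_{p-1})$, so the object to compute is $U_{x_{p-1},x_p}^{\dagger}\,M\,\mathcal{O}(x_{p-1})\,U_{x_{p-1},x_p}$. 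I then apply Lemma \ref{lemma BCH appendix} once more, with $\mathcal{O}\to M\mathcal{O}(x_{p-1})$, $x\to x_{p-1}$, $y\to x_p$ (legitimate since $x_{p-1}>x_p$), producing an outer sum of nested commutators of $M\mathcal{O}(x_{p-1})$ with $\mathcal{H}(z)$'s ranging over $[x_p,x_{p-1}-1]$ — precisely the new, $(p-1)$-th batch.

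Next I would substitute the inductive expansion of $M$ into this outer commutator. Because the commutator bracket is linear in its first slot and the scalar factors $\Upsilon(\textbf{z}^{(m)})$ commute through, the new-batch commutators can be pulled inside the $M$-sum to act directly on $B_{p-2}\,\mathcal{O}(x_{p-1})$, where $B_{p-2}$ is the bracketed word of the $(p-1)$-case of \eqref{eq lemma conjugation} (ending in $\mathcal{O}(x_{p-2})$). This yields exactly the word $B_{p-1}$ appearing in the claim. Writing $(-i)^{m}(-i)^{n_{p-1}}=(-i)^{n}$ with $n=m+n_{p-1}$, and merging the partition $m_1+\cdots+m_{p-2}=m$ with $n_{p-1}$ into a single partition $n_1+\cdots+n_{p-1}=n$, then reproduces the sum structure of \eqref{eq lemma conjugation}.

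The main obstacle — and the only genuinely non-trivial point — is to show that the two separate ordering functions combine into one, i.e. that $\Upsilon(\textbf{z}^{(m)})\,\Upsilon(\textbf{w})=\Upsilon(\textbf{z}^{(n)})$ on the concatenated list, where $\textbf{w}$ is the new batch. Here I would invoke the shuffle identity \eqref{eq upsilon identity}, $\Upsilon(\textbf{a})\Upsilon(\textbf{b})=\sum_{\textbf{c}\in\textbf{a}\shuffle\textbf{b}}\Upsilon(\textbf{c})$, together with the crucial observation that the batch ranges are disjoint and decreasing: every entry of $\textbf{z}^{(m)}$ lies in $[x_{p-1},x_1-1]$ and hence is strictly larger than every entry of $\textbf{w}\subseteq[x_p,x_{p-1}-1]$. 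Consequently, among all shuffles $\textbf{c}$ only the plain concatenation is non-increasing; every other interleaving places a smaller entry before a larger one and is annihilated by $\Upsilon$. The shuffle sum therefore collapses to the single surviving term $\Upsilon(\textbf{z}^{(n)})$, which is exactly what is needed to close the induction.
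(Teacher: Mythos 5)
Your proposal is correct and follows essentially the same route as the paper's own proof: induction on $p$, using the recursion \eqref{def O1...p} to write $\mathcal{O}_{1\ldots p-1}$ as the conjugated $(p-2)$-fold object times $\mathcal{O}(x_{p-1})$, applying Lemma \ref{lemma BCH appendix} to the outer conjugation, and collapsing the product of $\Upsilon$'s via the shuffle identity \eqref{eq upsilon identity} together with the fact that the summation ranges force all non-concatenation shuffles to vanish. Your write-up is in fact slightly more explicit than the paper's on the two delicate points (linearity pulling the new-batch commutators through the $M$-sum, and the strict separation of the batch ranges that kills the spurious shuffles), but the argument is the same.
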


\begin{proof}
    We proceed by induction. Suppose true for $p=s$ and consider the case $p=s+1$. To compute, \begin{equation}\label{short lemma eqtn}
    U_{x_{s},x_{s+1}}^{\dagger}\mathcal{O}_{1\ldots s}U_{x_{s},x_{s+1}},
    \end{equation} we start by expanding $\mathcal{O}_{1\ldots s}$ by applying definition \eqref{def O1...p} together with the inductive assumption,
\begin{equation}\label{eq: Os expanded}\begin{split}
        &\mathcal{O}_{1\ldots s}=\sum_{n=0}^{\infty} (-i)^n\sum_{n_1+\cdots +n_{s-1}=n} \ \ \sum_{z_1,\ldots, z_{n_1}=x_2}^{x_1-1}\ \ \sum_{z_{n_1+1},\ldots ,z_{n_1+n_2}=x_3}^{x_2-1}\cdots\sum_{z_{n_1+\cdots n_{s-2}+1},\ldots ,z_{n}=x_s}^{x_{s-1}-1}\\
&\bigg[\hspace{-1mm} \cdots\hspace{-1mm} \bigg[\hspace{-1mm} \cdots\hspace{-1mm} \bigg[\hspace{-1mm} \cdots\hspace{-1mm}  \bigg[\mathcal{O}(x_1),\mathcal{H}(z_1)\bigg]\hspace{-1mm} \cdots\hspace{-0.8mm} \mathcal{O}(x_2),\mathcal{H}(z_{n_1+1})\bigg]\hspace{-1mm} \cdots\hspace{-0.8mm}  \mathcal{O}(x_{s-1}),\mathcal{H}(z_{n_1+\cdots+ n_{s-2}+1})\bigg]\hspace{-1mm} \cdots\hspace{-0.8mm}  ,\mathcal{H}(z_{n})\bigg]\mathcal{O}(x_s)\Upsilon(\textbf{z}^{(n)}).\\
 \end{split}
\end{equation}
Now, we compute \eqref{short lemma eqtn} by substituting the RHS of \eqref{eq: Os expanded} for $\mathcal{O}_{1\ldots s}$ and applying lemma \ref{lemma BCH appendix} to obtain,
    \begin{equation}\label{eq limits}
        \begin{split}
 &\sum_{m,n=0}^{\infty} (-i)^{m+n}\sum_{n_1+\cdots+n_{s-1}=n} \ \sum_{z_1,\ldots, z_{n_1}=x_2}^{x_1-1}\ \sum_{z_{n_1+1},\ldots, z_{n_1+n_2}=x_3}^{x_2-1}\cdots\sum_{z_{n_1+\cdots n_{s-2}+1},\ldots, z_{n}=x_s}^{x_{s-1}-1}\ \sum_{z_{n+1},\ldots, z_{n+m}=x_{s+1}}^{x_{s}-1}\\
& \bigg[\hspace{-1mm} \cdots\hspace{-1mm} \bigg[\mathcal{O}(x_1),\mathcal{H}(z_1)\bigg]\hspace{-1mm} \cdots\hspace{-0.8mm}  \mathcal{O}(x_{s-1}),\mathcal{H}(z_{n_1+\cdots+ n_{s-2}+1})\bigg]\hspace{-1mm} \cdots\hspace{-0.8mm}  ,\mathcal{H}(z_{n})\bigg]\mathcal{O}(x_s),\mathcal{H}(z_{n+1})\bigg]\cdots \mathcal{H}(z_{n+m})\bigg]\Upsilon(\textbf{z}^{(n)})\Upsilon(\textbf{z}^{(m)}),
        \end{split}
    \end{equation}
with $\textbf{z}^{(m)}=z_nz_{n+1}\ldots z_m$. Now, apply \eqref{eq upsilon identity} to replace the product $\Upsilon(\textbf{z}^{(n)})\Upsilon(\textbf{z}^{(m)})$ with the sum $\sum_{\textbf{z}}\Upsilon(\textbf{z})$ with $\textbf{z}\in \textbf{z}^{(n)}\shuffle\textbf{z}^{(m)}$ and note that the limits of the summations in \eqref{eq limits} mean that the only term which survives is $\Upsilon(\textbf{z}^{(n+m)})$ where $\textbf{z}^{(n+m)}=z_1\ldots z_{n+m}$. Finally to show that the lemma holds for $p=s+1$, relabel as $m+n\rightarrow n$ and $m\rightarrow n_s$.
\end{proof}

\begin{corollary}\label{cor operator product expansion}
    \begin{equation}\label{eq: cor operator product expansion}
    \begin{split}
  &\mathcal{O}^H(x_1)\ldots \mathcal{O}^H(x_k)\\        &=\sum_{n=0}^{\infty} (-i)^n\sum_{n_1+\cdots +n_{k}=n}\ \ \sum_{z_1,\ldots, z_{n_1}=x_2}^{x_1-1}\ \ \sum_{z_{n_1+1},\ldots, z_{n_2}=x_3}^{x_2-1}\cdots \ \ \sum_{z_{n_{1}+\cdots+ n_{k-1}+1},\ldots, z_{n}=1}^{x_{k}-1}\\
&\bigg[\hspace{-1mm} \cdots \bigg[\mathcal{O}(x_1),\mathcal{H}(z_1)\bigg]\hspace{-1mm} \cdots\hspace{-0.8mm} ,\mathcal{H}(z_{n_1})\bigg]\mathcal{O}(x_2),\mathcal{H}(z_{n_1+1})\bigg]\hspace{-1mm} \cdots\hspace{-0.8mm}  \mathcal{O}(x_k),\mathcal{H}(z_{n_1+\cdots+ n_{k-1}+1})\bigg]\hspace{-1mm} \cdots\hspace{-0.8mm}  ,\mathcal{H}(z_{n})\bigg]\Upsilon(\textbf{z}^{(n)}).
    \end{split}
    \end{equation}
\end{corollary}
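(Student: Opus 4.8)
The plan is to recognise this as the combination of the three preceding lemmas, so that the proof is essentially a bookkeeping exercise mirroring the inductive step already carried out for equation \eqref{eq lemma conjugation}. First I would invoke Lemma \ref{recursion lemma lemma} to write
\[
\mathcal{O}^H(x_1)\ldots \mathcal{O}^H(x_k)=U_{x_k}^{\dagger}\mathcal{O}_{1\ldots k} U_{x_k},
\]
and then expand the inner operator $\mathcal{O}_{1\ldots k}$. By its recursive definition \eqref{def O1...p} we have $\mathcal{O}_{1\ldots k}=U_{x_{k-1},x_k}^{\dagger}\mathcal{O}_{1\ldots k-1}U_{x_{k-1},x_k}\,\mathcal{O}(x_k)$, and the conjugation factor $U_{x_{k-1},x_k}^{\dagger}\mathcal{O}_{1\ldots k-1}U_{x_{k-1},x_k}$ is given explicitly by \eqref{eq lemma conjugation} with $p=k$. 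This produces the first $k-1$ groups of summed indices, with $z_1,\ldots,z_{n_1}$ ranging over $[x_2,x_1-1]$ down to the $(k-1)$-th group ranging over $[x_k,x_{k-1}-1]$, the nested-commutator string terminating in $\mathcal{O}(x_{k-1})$. Appending $\mathcal{O}(x_k)$ on the right then realises $\mathcal{O}_{1\ldots k}$ as a nested commutator $B$ (carrying the factor $\Upsilon(\textbf{z}^{(n)})$) multiplied by $\mathcal{O}(x_k)$.

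Next I would perform the outermost conjugation. Since $U_{x_k}=U_{x_k,1}$, applying Lemma \ref{lemma BCH appendix} with $\mathcal{O}=\mathcal{O}_{1\ldots k}=B\,\mathcal{O}(x_k)$ and lower limit $y=1$ introduces the final ($k$-th) group of indices $z_{n+1},\ldots,z_{n+m}$ ranging over $[1,x_k-1]$, nesting them as commutators of the whole operator $B\,\mathcal{O}(x_k)$ with $\mathcal{H}(z_{n+1}),\ldots,\mathcal{H}(z_{n+m})$. Because the commutator is linear, these outer commutators pass through the sum over $z_1,\ldots,z_n$ and act directly on the bracket $[\ldots]\mathcal{O}(x_k)$, reproducing exactly the commutator structure written in \eqref{eq: cor operator product expansion}. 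This step contributes an extra factor $\Upsilon(\textbf{z}^{(m)})$ alongside the $\Upsilon(\textbf{z}^{(n)})$ inherited from $B$.

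The crux is then to collapse the product $\Upsilon(\textbf{z}^{(n)})\Upsilon(\textbf{z}^{(m)})$ into the single factor $\Upsilon(\textbf{z}^{(n+m)})$ claimed in the corollary. For this I would use the shuffle identity \eqref{eq upsilon identity} to write $\Upsilon(\textbf{z}^{(n)})\Upsilon(\textbf{z}^{(m)})=\sum_{\textbf{w}\in \textbf{z}^{(n)}\shuffle\textbf{z}^{(m)}}\Upsilon(\textbf{w})$, and then observe that the summation ranges make all but one shuffle vanish: the first $n$ indices are bounded below by $x_k$ (each of groups $1,\ldots,k-1$ has lower limit $x_{j+1}\ge x_k$), while the final $m$ indices are bounded above by $x_k-1$. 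Hence every one of the first $n$ entries strictly exceeds every one of the last $m$ entries, so the only interleaving that can be non-increasing---and thus contribute a nonzero $\Upsilon$---is the concatenation $\textbf{z}^{(n+m)}=z_1\ldots z_n z_{n+1}\ldots z_{n+m}$. Relabelling $m\to n_k$ and $n+m\to n$ then yields \eqref{eq: cor operator product expansion}. Since this is the same mechanism that drove the induction for \eqref{eq lemma conjugation}, I expect the main obstacle to be purely notational: keeping the multi-indexed summation limits and the placement of the $\mathcal{O}(x_j)$ insertions consistent through the two successive conjugations, rather than any genuine analytic difficulty.
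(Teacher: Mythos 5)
Your proof is correct and follows essentially the same route as the paper: the paper's own proof is the one-liner that the result follows from plugging \eqref{eq lemma conjugation} into \eqref{recursion lemma}, implicitly applying that lemma with the lower limit taken to be $1$ so that the outermost conjugation by $U_{x_k}=U_{x_k,1}$ is already covered. Your only deviation is that you unpack this final conjugation explicitly---via Lemma \ref{lemma BCH appendix} together with the shuffle identity and the observation that the summation ranges kill all shuffles except the concatenation---which simply re-executes one step of the induction already carried out in the proof of \eqref{eq lemma conjugation}; this is harmless (arguably more careful about the lower limit $1$) but not a different argument.
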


\begin{proof}
    Follows immediately from plugging \eqref{eq lemma conjugation} into \eqref{recursion lemma}.
\end{proof}

\begin{lemma}\label{lemma correlator form}
    \begin{equation}
    \begin{split}
  &\big\langle \mathcal{O}^H(x_1)\ldots \mathcal{O}^H(x_k)\big\rangle =\sum_{n=0}^{\infty} (-i)^n\sum_{z_1,\ldots, z_{n}=1}^{x_1-1}\bigg\langle \bigg[\hspace{-1mm} \cdots\hspace{-1mm} \bigg[\bigg[\mathcal{O}(x_1)\ldots \mathcal{O}(x_k),\mathcal{H}(z_1)\bigg],\mathcal{H}(z_{2})\bigg]\hspace{-1mm} \cdots\hspace{-0.8mm},  \mathcal{H}(z_{n})\bigg]\bigg\rangle \Upsilon(\textbf{z}^{(n)}).\\
    \end{split}
    \end{equation}
\end{lemma}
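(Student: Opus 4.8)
The plan is to recognise the right-hand side as a single conjugation and then reduce the lemma to an operator-expectation identity. Applying Lemma~\ref{lemma BCH appendix} to the composite operator $\mathcal{O}:=\mathcal{O}(x_1)\cdots\mathcal{O}(x_k)$ with $x=x_1$ and $y=1$ shows that the claimed right-hand side is exactly the expansion of $U_{x_1}^\dagger\,\mathcal{O}\,U_{x_1}$, since the sum over $z_1,\ldots,z_n\in[1,x_1-1]$ weighted by $\Upsilon(\textbf{z}^{(n)})$ is precisely that produced by the lemma. So the content to be established is the identity
\[
\langle \mathcal{O}^H(x_1)\cdots\mathcal{O}^H(x_k)\rangle=\big\langle U_{x_1}^\dagger\,\mathcal{O}(x_1)\cdots\mathcal{O}(x_k)\,U_{x_1}\big\rangle .
\]

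Next I would expose the difference between the two sides. Using the composition relations \eqref{UU composition}--\eqref{UUdagger composition}, the left-hand side is the nested product
\[
\big\langle U_{x_1}^\dagger\,\mathcal{O}(x_1)\,U_{x_1,x_2}\,\mathcal{O}(x_2)\,U_{x_2,x_3}\cdots \mathcal{O}(x_k)\,U_{x_k}\big\rangle,
\]
whereas the right-hand side, after writing $U_{x_1}=U_{x_1,x_2}U_{x_2,x_3}\cdots U_{x_k}$, has all of the intermediate evolution operators gathered to the far right. The strategy is therefore to telescope each factor $U_{x_i,x_{i+1}}$ rightward, past $\mathcal{O}(x_{i+1}),\ldots,\mathcal{O}(x_k)$, until the $U$'s recombine on the right. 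Because $U_{x_i,x_{i+1}}$ is built only from Hamiltonians $\mathcal{H}(z)$ with $z\geq x_{i+1}$, every commutator correction picked up in this process is of the form $[\mathcal{O}(x_j),\mathcal{H}(z)]$ with $z$ strictly to the future of $x_j$.

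The hard part, which I expect to be the crux, is to show that these future commutator corrections do not survive inside the expectation value. The intuition is causal: $\mathcal{O}^H(x_j)$ depends only on the Hamiltonians in its past, and via the Peierls bracket \eqref{peierls bracket} a future commutator $[\mathcal{O}(x_j),\mathcal{H}(z)]$ produces an advanced factor $\Delta_{x_j z}=-\Delta^R_{z x_j}$, i.e. a propagator directed out of the external vertex $x_j$, which is exactly the configuration the diagrammatic rules forbid. I would try to prove the vanishing order by order in the coupling, expanding with Wick's theorem and using the support property $\Delta^R_{ab}=0$ unless $a\succ b$, together with the ordering identity $\Delta_{z_iz_j}\Upsilon=\Delta^R_{z_iz_j}\Upsilon$ and the shuffle relation \eqref{eq upsilon identity}, to argue that the sum of all such future-edge contractions cancels. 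A concrete fallback, closer to the surrounding appendix, is to start instead from Corollary~\ref{cor operator product expansion}, expand the simple nested commutator of $\mathcal{O}(x_1)\cdots\mathcal{O}(x_k)$ by the Leibniz rule, and match the resulting pieces term-by-term against the staggered summation ranges of the corollary, the matching of ranges being controlled by the causal support of $\Delta^R$ and of $\Upsilon$; the remaining obstacle is then the same, namely verifying that the residual future-directed terms contribute nothing under $\langle\,\cdot\,\rangle$.
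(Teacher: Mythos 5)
Your opening reduction is correct and is the cleanest way to see what is at stake: by Lemma~\ref{lemma BCH appendix} with $x=x_1$, $y=1$ applied to $\mathcal{O}=\mathcal{O}(x_1)\cdots\mathcal{O}(x_k)$, the right-hand side of the lemma equals $\langle U_{x_1}^{\dagger}\,\mathcal{O}(x_1)\cdots\mathcal{O}(x_k)\,U_{x_1}\rangle$, while by the composition relations the left-hand side is the interleaved product $\langle U_{x_1}^{\dagger}\mathcal{O}(x_1)U_{x_1,x_2}\mathcal{O}(x_2)\cdots\mathcal{O}(x_k)U_{x_k}\rangle$, and you have correctly located the crux in the ``future'' commutators $[\mathcal{O}(x_j),\mathcal{H}(z)]$ with $z>x_j$ that are generated when the interleaved evolution operators are pushed to the right. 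But at that point the proposal stops being a proof: ``I would try to prove the vanishing\dots to argue that the sum of all such future-edge contractions cancels'' is a plan, not an argument, and the intuition offered for it is circular --- the diagrammatic rules forbid edges directed out of external vertices only because they are read off from the staggered expansion of Corollary~\ref{cor operator product expansion}; they cannot be invoked to show that the unstaggered form in the present lemma agrees with that expansion.

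The deeper problem is that this gap cannot be closed, because the vanishing you need is false. Take $\mathcal{C}$ to be the $3$-chain $1\prec 2\prec 3$ with interaction only at the middle element, and $k=2$, $\mathcal{O}(x_i)=\phi(x_i)$, $x_1=3$, $x_2=1$. With $\mathcal{H}(2)=\tfrac{g}{2}\phi(2)^2$ one has $U_{x_1}=e^{-i\mathcal{H}(2)}$ and, exactly,
\[
U_{x_1}^{\dagger}\phi(3)U_{x_1}=\phi(3)+g\Delta^R_{32}\phi(2),\qquad
U_{x_1}^{\dagger}\phi(1)U_{x_1}=\phi(1)-g\Delta^R_{21}\phi(2),
\]
while $\phi^H(3)=\phi(3)+g\Delta^R_{32}\phi(2)$ and $\phi^H(1)=\phi(1)$. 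Hence
\[
\big\langle U_{x_1}^{\dagger}\phi(3)\phi(1)U_{x_1}\big\rangle-\big\langle\phi^H(3)\phi^H(1)\big\rangle
=-\,g\,\Delta^R_{21}\big(W_{32}+g\,\Delta^R_{32}W_{22}\big),
\]
which is a nonzero polynomial in $g$ whenever $\Delta^R_{21}$ and $\Delta^R_{32}$ are nonzero, since $W_{22}>0$ in any Gaussian state (\textit{e.g.}\ the SJ state of $\Delta^R_{21}=\Delta^R_{32}=1$, $\Delta^R_{31}=0$ has $W_{32}=i/2$, $W_{22}=\tfrac{\sqrt{2}}{2}$); the quartic theory $\mathcal{H}(2)=\tfrac{g}{4!}\phi(2)^4$ gives the same discrepancy $\tfrac{g}{2}\Delta^R_{21}W_{32}W_{22}$ at first order. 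There is a single future-edge term here, so no cancellation among such contractions can occur. The obstruction is exactly the object you identified: the unstaggered nested commutator attaches Pauli--Jordan factors $\Delta_{x_j z}$ to the later external points, and $\Upsilon(\mathbf{z}^{(n)})$ constrains only the internal labels, so it cannot convert $\Delta_{x_j z}=-\Delta^R_{z x_j}$ (for $z>x_j$) into a retarded edge. You should be aware that the paper's own proof asserts precisely this vanishing (``the diagrams which make up the difference\dots vanish in the domain of the nested sum''), so your difficulty reproduces, rather than overlooks, the weak step there: the statement that is actually correct for $k\geq 2$, and which the final diagrammatic rules with their prohibition on outgoing external edges encode, is the staggered expansion of Corollary~\ref{cor operator product expansion}, in which every $\mathcal{H}(z)$ commuted past $\mathcal{O}(x_j)$ has $z<x_j$ and therefore produces only retarded edges directed into external vertices. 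For $k=1$ the lemma is simply Lemma~\ref{lemma BCH appendix} in expectation and holds; for $k\geq2$ both your main route and your fallback inherit this same unfixable step.
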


\begin{proof}
Begin by taking the expectation value of both sides of \eqref{eq: cor operator product expansion}. Then note that the value of the summand, $$\bigg\langle \bigg[\hspace{-1mm} \cdots\bigg[\mathcal{O}(x_1),\mathcal{H}(z_1)\bigg]\hspace{-1mm} \cdots\hspace{-0.8mm} ,\mathcal{H}(z_{n_1})\bigg]\mathcal{O}(x_2),\mathcal{H}(z_{n_1+1})\bigg]\hspace{-1mm} \cdots\hspace{-0.8mm}  \mathcal{O}(x_k),\mathcal{H}(z_{n_1+\cdots+ n_{k-1}+1})\bigg]\hspace{-1mm} \cdots\hspace{-0.8mm}  ,\mathcal{H}(z_{n})\bigg]\bigg\rangle\Upsilon(\textbf{z}^{(n)}),$$ differs from the value of, $$\bigg\langle \bigg[\hspace{-1mm} \cdots\hspace{-1mm} \bigg[\bigg[\mathcal{O}(x_1)\ldots \mathcal{O}(x_k),\mathcal{H}(z_1)\bigg],\mathcal{H}(z_{2})\bigg]\hspace{-1mm} \cdots\hspace{-0.8mm},  \mathcal{H}(z_{n})\bigg]\bigg\rangle\Upsilon(\textbf{z}^{(n)}),$$ only outside the domain of the nested sum, so replace the former by the latter. (One way to observe this equivalence is via the diagrammatic representations of the expectation value of the nested correlators (cf. Section \ref{sec: in in}). The latter expression is given by all diagrams with $n$ internal and $k$ external vertices such that each internal vertex is connected via a path of retarded propagators to at least one external vertex. The former expression is given by the subset of these diagrams in which only the last $n_{k}$ internal vertices may be connected by a path of retarded propagators to $x_k$, only the last $n_{k}+n_{k-1}$ internal vertices may be connected by a path of retarded propagators to $x_{k-1}$,  \textit{etc.} The diagrams which make up the difference between the two expressions vanish in the domain of the nested sum.)

To complete the proof we note that $\Upsilon(\textbf{z}^{(n)})$ vanishes everywhere outside the domain of the sum which enables us to make the replacement,  \begin{equation}
   \sum_{n_1+\cdots +n_{k}=n}\ \ \sum_{z_1,\ldots, z_{n_1}=x_2}^{x_1-1}\ \ \sum_{z_{n_1+1},\ldots, z_{n_2}=x_3}^{x_2-1}\cdots \ \ \sum_{z_{n_{1}+\cdots+ n_{k-1}+1},\ldots, z_{n}=1}^{x_{k}-1}\longrightarrow \sum_{z_1,\ldots, z_{n}=1}^{x_1-1}.
\end{equation}
\end{proof}

\bibliographystyle{unsrt}
\bibliography{biblio}
\end{document}